\newtheorem{thm}{Theorem}[section]
\newtheorem{lem}[thm]{Lemma}
\theoremstyle{definition}
\newtheorem{defn}[thm]{Definition}
\theoremstyle{remark}
\newtheorem{rem}[thm]{Remark}
\newtheorem{exa}[thm]{Example}
\numberwithin{equation}{section}
\newcommand{\norm}[1]{\left\Vert#1\right\Vert}
\newcommand{\abs}[1]{\left\vert#1\right\vert}
\newcommand{\set}[1]{\left\{#1\right\}}
\newcommand{\RR}{\mathbb{R}}                            
\newcommand{\NN}{\mathbb{N}}                            
\newcommand{\HH}{\mathbb{H}}                            
\newcommand{\Sym}{\mathbb{S}}                           
\newcommand{\OO}{\mathrm{O}}
\newcommand{\SO}{\mathrm{SO}}
\newcommand{\octa}{\mathbb{O}}
\renewcommand{\vec}{\pmb}
\newcommand{\hh}{\pmb{h}}
\newcommand{\mm}{\pmb{m}}
\newcommand{\vv}{\pmb{v}}
\newcommand{\ee}{\pmb{e}}
\newcommand{\ba}{\mathbf{a}}
\newcommand{\be}{\mathbf{e}}
\newcommand{\bgamma}{\pmb{\gamma}}
\newcommand{\bsigma}{\pmb{\sigma}}
\newcommand{\bepsilon}{\pmb{\epsilon}}
\newcommand{\id}{\bm{1}}
\newcommand{\bC}{\mathbf{C}} 
\newcommand{\bI}{\mathbf{I}} 
\newcommand{\bJ}{\mathbf{J}} 
\newcommand{\bS}{\mathbf{S}} 
\newcommand{\bP}{\mathbf{P}} 
\DeclareMathOperator{\tr}{tr}
\DeclareMathOperator{\rank}{rank}
\tikzstyle{rect}=[draw=black, fill=white, line width=1.5pt, rectangle,
\tikzstyle{fancytitle}=[fill=black, text=white]
\tikzstyle{diam}=[draw=black, fill=white,rotate=0, line width=1.5pt, diamond, inner sep=10pt, inner ysep=5pt]
\tikzstyle{fancytitle}=[fill=black, text=white]
\tikzstyle{elli}=[draw=black, fill=white,rotate=0, line width=1.5pt, ellipse, inner sep=10pt, inner ysep=5pt]
\tikzstyle{fancytitle}=[fill=black, text=white]
\begin{document}

\title{Integrity bases for cubic nonlinear magnetostriction}

\author{J. Taurines}
\address[Julien Taurines]{Université Paris-Saclay, ENS Paris-Saclay, CNRS,  LMT - Laboratoire de Mécanique et Technologie, 91190, Gif-sur-Yvette, France}
\email{julien.taurines@ens-paris-saclay.fr}

\author{M. Olive}
\address[Marc Olive]{Université Paris-Saclay, ENS Paris-Saclay, CNRS,  LMT - Laboratoire de Mécanique et Technologie, 91190, Gif-sur-Yvette, France}
\email{marc.olive@math.cnrs.fr}

\author{R. Desmorat}
\address[Rodrigue Desmorat]{Université Paris-Saclay, ENS Paris-Saclay, CNRS,  LMT - Laboratoire de Mécanique et Technologie, 91190, Gif-sur-Yvette, France}
\email{rodrigue.desmorat@ens-paris-saclay.fr}

\author{O. Hubert}
\address[Olivier Hubert]{Université Paris-Saclay, ENS Paris-Saclay, CNRS,  LMT - Laboratoire de Mécanique et Technologie, 91190, Gif-sur-Yvette, France}
\email{olivier.hubert@ens-paris-saclay.fr}

\author{B. Kolev}
\address[Boris Kolev]{Université Paris-Saclay, ENS Paris-Saclay, CNRS,  LMT - Laboratoire de Mécanique et Technologie, 91190, Gif-sur-Yvette, France}
\email{boris.kolev@math.cnrs.fr}

\subjclass[2020]{74F15, 15A72}%
\keywords{Invariants, Strong coupling, Magnetostriction}%

\date{December 10, 2020}%

\begin{abstract}
  A so-called smart material is a material that is the seat of one or more multiphysical coupling. One of the key points in the development of the constitutive laws of these materials, either at the local or at the global scale, is to formulate a free energy density (or enthalpy) from vectors, tensors, at a given order and for a class of given symmetry, depending on the symmetry classes of the crystal constituting the material or the symmetry of the representative volume element. This article takes as a support of study the stress and magnetization couple ($\bsigma$, $\mm$) involved in the phenomena of magnetoelastic coupling in a cubic symmetry medium. Several studies indeed show a non-monotonic sensitivity of the magnetic susceptibility and magnetostriction of certain soft magnetic materials under stress. Modeling such a phenomenon requires the introduction of a second order stress term in the Gibbs free energy density. A polynomial formulation in the two variables stress and magnetization is preferred over a tensorial formulation. For a given material symmetry class, this allows to express more easily the free energy density at any bi-degree in $\bsigma$ and $\mm$ (\textit{i.e.} at any constitutive tensors order for the so-called tensorial formulation). A rigorous and systematic method is essential to obtain the high-degree magneto-mechanical coupling terms and to build a free energy density function at any order which is invariant by the action of the cubic (octahedral) group. For that aim, theoretical and computer tools in Invariant Theory, that allow for the mathematical description of cubic nonlinear magneto-elasticity, are introduced. Minimal integrity bases of the invariant algebra for the pair $(\vec m, \bsigma)$, under the proper (orientation-preserving) and the full cubic groups, are then proposed. The minimal integrity basis for the proper cubic group is constituted of 60 invariants, while the minimal integrity basis for the full cubic group (the one of interest for magneto-elasticity) is made up of 30 invariants. These invariants are formulated in a (coordinate free) intrinsic manner, using a generalized cross product to write some of them. The counting of independent invariants of a given multi-degree in $(\mm, \bsigma)$ is performed. It is shown accordingly that it is possible to list without error all the material parameters useful for the description of the coupled magnetoelastic behavior from the integrity basis. The technique is applied to derive general expressions $\Psi^\star(\bsigma, \mm)$ of the free energy density at the magnetic domains scale exhibiting cubic symmetry. The classic results for an isotropic medium are recovered.
\end{abstract}

\maketitle

\section*{Introduction}
\label{sec:intro}

A so-called \textit{smart} material is a material which has one or more properties making them adaptive and/or evolutive, and which can be modified in a controlled manner by mechanical stresses, temperature, humidity, pH, electric or magnetic field. This includes magnetostrictive materials~\cite{Dap2004}, classical or magnetic shape memory alloys~\cite{Lex2013}, piezoelectric materials~\cite{Uch2017}, multi-ferroic media~\cite{CDB2008}, etc. Some of them are at the functioning basis of sensors or actuators, others are used for energy production or harvesting, and their properties are often intrinsic, resulting from a so-called \textit{multiphysic} coupling expressed through a constitutive law~\cite{DHW2014}.

The study of electro-magneto-mechanical coupling phenomena is thus of growing interest in the recent years, in particular given the development of computing capacities and the optimization perspectives that their taking into account allows to glimpse. Indeed, the response of a magnetic or dielectric material, if it is at first order function of the magnetic field or the electric field, also depends very strongly on the applied stresses~\cite{DHBB2008,DHW2014}. These coupling effects can sometimes constitute an issue (they are for example responsible for part of the vibrations of rotating machines or the noise emitted by electrical transformers~\cite{LHM2017}), they can also be exploited to serve as a basis for the design of innovative devices for energy conversion, health control of structures or actuators~\cite{schwartz2002}, and even for the design of new materials (often composites)~\cite{NBDM2011}.

The construction of constitutive laws often calls for the writing of a free energy (or enthalpy) density whose derivative with respect to state variables produces the associated variables~\cite{LCBD2009}. The electro-magneto-mechanical coupled terms in the expression of free energy thus most often involve isotropic invariant forms combining vectors or pseudo-vectors\footnote{The magnetic field, magnetic induction or magnetization are pseudo-vectors, which means that an improper orthogonal transformation $g$ ($\det g=-1$) acts on them as $(\det g) g\,\vv=-g\,\vv$ (while $g$ acts on a vector as $g\,\vv$). Distinguishing between vectors and pseudo-vectors is important when considering material symmetries, because a symmetry plane transforms, for instance, a direct basis into an indirect one.} (electric field, magnetic field) and second-order symmetric tensors (strain tensor for example). Furthermore, ferromagnetic or ferroelectric materials are often polycrystals, each crystal exhibiting a certain degree of symmetry and the assembly of which can lead to more or less strong symmetries than the symmetry observed at the local scale. On the other hand, a decomposition of the microstructure in domains (magnetic or ferroelectric) can be considered~\cite{HS2008,DHW2014}. Given the complexity of the materials, macroscopic approaches have long been key for the numerical simulation of the coupled behavior of these materials. Multiscale approaches, using specific localization and homogenization rules, have developed thanks to the general reduction in computation time, and because they often allow the observed experimental behaviors to be more accurately represented~\cite{DHBB2008,DHW2014,Hub2019} without requiring full field approach (micromagnetism, phase field) that still remain highly time-consuming. In these approaches, the behavior is described at a scale where most of the fields can be considered as homogeneous but where the material cannot be considered as isotropic.

One of the key points in the development of constitutive laws at the local or global scale is therefore to formulate a free energy (or enthalpy) density from vectors, tensors, at a given order and for a given symmetry class, depending on the symmetry classes of the crystal constituting the material or the symmetry of the representative volume element, of course avoiding to forget terms in the operation. This article deals with the way in which it is possible to list without error all the invariants and thus all the material parameters useful for the description of the coupled behavior, involving one second-order symmetric tensor (e.g. the stress tensor $\bsigma$), and a vector or a pseudo-vector (e.g. the magnetization $\vec m$).

We will therefore need a minimal integrity basis of the invariant algebra for pair $(\vec m, \bsigma)$ under a certain group $G$ (that can be shown to be a subgroup of $\OO(3)$ even if magnetic point groups are in fact involved \cite{LL1960,Bir1964,SC1984,KE1990,WG2004}, see section \ref{sec:mathematical-modeling}). Many works concern the determination of such bases for matrices and (pseudo-)vectors. As we will focus on cubic symmetry we will rely on  the work of Smith, Smith and Rivlin \cite{SSR1963} for the crystal symmetry classes. Making use of modern computational means, our work will be to check (and correct) these authors results for the proper cubic (octahedral) group $G=\octa^+\subset \SO(3)$ and for the full cubic group $G=\octa$. One achievement of the present paper is an intrinsic (frame independent) writing of the corresponding cubic integrity bases. To the best of our knowledge this point was not addressed in the literature. We give an exact evaluation of the number of terms / of material parameters to be considered in a cubic free energy (or enthalpy) density for non-linear magneto-elasticity.

\subsection*{Outline}

We present in the next section a recent example of introduction of a second order stress term in the Gibbs free energy density of a magnetic domain within a cubic crystal, and conclude in the limitations of such a constitutive tensor based approach. This example highlights the requirement for a rigorous and systematic method to obtain the high degree magneto-mechanical coupling terms. We then introduce theoretical and computational tools (\autoref{sec:mathematical-modeling}) that allow for the mathematical description of cubic nonlinear magneto-elasticity. The \autoref{sec:integrity-basis} is especially dedicated to the construction of a minimal integrity basis for the cubic groups $\octa^+$ and $\octa$, allowing for the closed form calculation of any magnetization/stress degree terms in Gibbs free energy. We turn back to cubic magneto-elasticity applications in~\autoref{sec:domain-scale-coupling}, dealing with modeling at the domain scale. Finally, three appendices \autoref{sec:proofs}, \autoref{sec:algo} and \autoref{sec:degree-bounds} have been added to gather the technical details required to prove the two main theorems \ref{thm:main-octa} and \ref{thm:main-octa-plus}.

\subsection*{Preamble}

In this paper all the energy quantities are energy densities (J/m$^3$). Considering thermo-magneto-mechanics, the arguments of the internal energy density $u$ are the following thermodynamics variables: the entropy density $s$, the magnetic induction $\vec b$ and the (small) strain $\bepsilon$. The introduction of \emph{(Helmholtz) free energy density} $\Psi=u-Ts$, by standard Legendre transform, allows for the energy density to be defined as a function of the absolute temperature $T$ instead of the entropy density. The \textit{magnetic free enthalpy} $k=\Psi-\vec h \cdot \vec b$ (Legendre transform on magnetic dual variables $(\vec h, \vec b)$) allows for the energy density to be defined as a function of the magnetic field $\vec h$ instead of the magnetic induction. The Gibbs free energy (or free enthalpy) density $g=k-\bsigma:\bepsilon$ (Legendre transform on mechanical dual variables $(\bsigma, \bepsilon)$) allows for the energy to be defined as a function of the stress $\bsigma$ instead of the strain. On the other hand, due to the decomposition $\vec b=\mu_0(\vec h + \vec m)$ (where $\mu_0$ is the vacuum magnetic permeability), a common simplification is to consider only material contributions to the energy variation and not magnetic field contributions. The \textit{free energy density} $\Psi^\star(\bsigma, \mm)$ will be defined as a function of magnetization and stress, \textit{i.e.} considered at a given (reference) temperature.

\section{Magneto-elastic coupling modeling: state of the art and objectives}
\label{sec:mechanical-modeling}

The magneto-elastic coupling results in the existence of a deformation of magnetic origin: the magnetostriction strain
\begin{equation*}
  \bepsilon^{\mu}=\bepsilon-\bepsilon^{e},
\end{equation*}
where $\bepsilon$ and $\bepsilon^{e}$ rely to total and elastic strains respectively, and a combined effect of the mechanical stress $\bsigma$ on the magnetic behavior\footnote{A different approach consists in considering the magnetostriction strain as the effect of a combination of volume forces and torques following the work of Eringen and Maugin in the Seventies~\cite{Mau1990}. In this description magnetostriction strain is an elastic strain function of magnetic and electric quantities at the very local (atomic) scale. This model is not retained for the description of the magnetostriction strain in this paper. It is defined as a free strain dependent on material constants as dilatation coefficients for a thermal strain.}.

The modeling of these phenomena has been the subject of numerous works at different scales~\cite{Boz1951,Cul1972,DTdLac1993,GHBB1998,Arm2002,DKT2012,DHBB2008}. The multiscale approaches have known recent developments~\cite{DRH2014,Dap2004,Hub2019}. For some of them, the magnetic domains volume fraction is calculated from the Gibbs free energy density in each domain by a stochastic approach. A magnetic domain (Weiss domain) can be defined as a material volume inside a ferromagnetic material (they are however present in ferrimagnetic and antiferromagnetic materials) where the magnetization is uniform in direction and magnitude, equal to the saturation magnetization considered as a material constant at the room temperature \cite{HS2008}. At the demagnetized state, the matter is divided in numerous magnetic domains whose organization leads to an null average magnetization. Magnetic domains inside a grain are separated by domain walls (Bloch walls) where the magnetization rotates from one direction to another usually associated with the crystallographic axes. Their thickness is usually negligible comparing to the other dimensions of the domain. \autoref{fig:fe27codomain} illustrates the domains organization at the surface of a Fe-27\%Co polycrystalline alloy. This picture has been obtained using a Kerr effect set-up \cite{Sav2018}. Each domain defines its own magnetization but also its own deformation by spin-orbit coupling mechanism at the atomic scale \cite{DTdLac1993}. It can be considered as a uniform free deformation over the domain.

\begin{figure}[ht]
  \centering
  \includegraphics[width=0.65\textwidth]{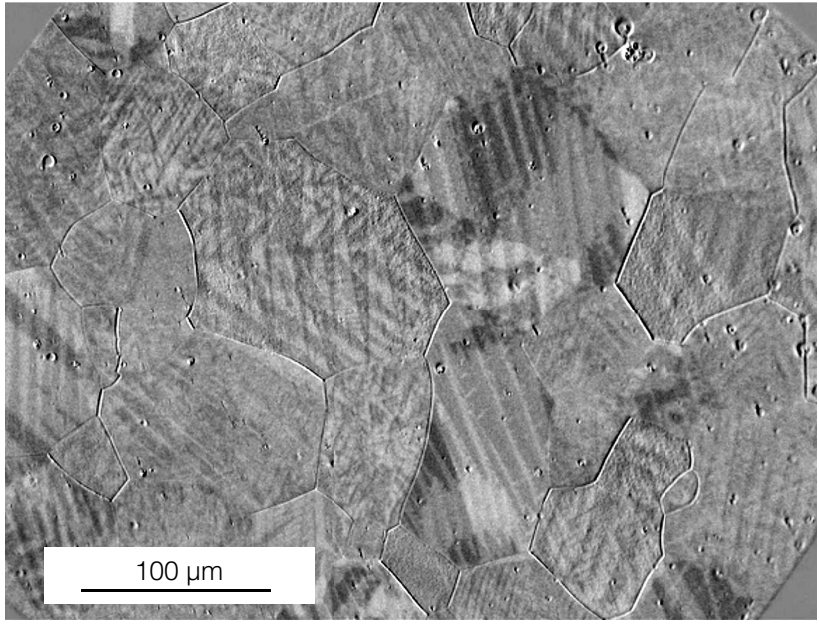}
  \caption{Illustration of the magnetic domains structure at the surface of a Fe-27\%Co polycrystal \cite{Sav2018} - Kerr microscopy.}
  \label{fig:fe27codomain}
\end{figure}

We here focus on existing modeling \emph{at the magnetic domain scale} in which the local magnetization $\mm$ ---defining the macroscopic magnetization $\vec M=\langle \mm \rangle$ by spatial averaging--- has a constant norm, equal to the saturation magnetization $m_{s}$. The equality
\begin{equation}\label{eq:mi}
  \mm= m_{s} \, \bgamma, \qquad \| \bgamma \|=1,
\end{equation}
defines then the magnetization direction vector $\bgamma$.

In a first approximation, the magnetostriction is often considered as a stress independent free strain $\bepsilon^{\mu}$, quadratic in magnetization
\cite{Cul1972,DTdLac1993} (linear terms in $\mm$ are forbidden by the invariance of the underlying microscopic systems under time reversal \cite{LL1960,Bir1964,Eri1976}).
The corresponding magneto-elastic free energy density $ \Psi^{\star\mu\sigma}(\mm,\bsigma)$, defined at the magnetic domain scale,  is:
\begin{equation}\label{gibbslin}
  \Psi^{\star\mu\sigma}(\mm,\bsigma)=-\bsigma:{\pmb{\mathcal{E}}}:(\mm \otimes \mm)=-\bsigma:\bepsilon^{\mu}
\end{equation}
The magnetostriction strain is obtained as the derivative, with respect to the stress, of the magneto-elastic density:
\begin{equation}
  \bepsilon^{\mu}=-\frac{\partial \Psi^{\star\mu\sigma}}{\partial \bsigma}=\pmb{\mathcal{E}}:(\mm \otimes \mm).
\end{equation}
$m_s^2\,\pmb{\mathcal{E}}$ is the fourth-order magnetostriction tensor ($\mathcal{E}_{ijkl}=\mathcal{E}_{jikl}=\mathcal{E}_{ijlk}$), function of three independent material parameters in the cubic symmetry case. This number reduces to two constants (the so-called magnetostriction constants $\lambda_{100}$ and $\lambda_{111}$~\cite{Boz1951}) when further incompressibility condition $\tr\bepsilon^{\mu}=0$ is considered.

It is moreover observed that for some different iron-based ferromagnetic materials (such as iron-silicon, iron-cobalt, steels) the mechanical stress has a non-monotonic effect on the magnetic behavior: such a non-monotony is characterized by a sudden decrease of the initial (macroscopic) magnetic susceptibility with increasing stress. This point is illustrated in \autoref{fe27conew} for Fe-27\%Co polycrystal.

\begin{figure}[ht]
  \centering
  \begin{subfigure}{0.49\textwidth}
    \includegraphics[width=1\textwidth]{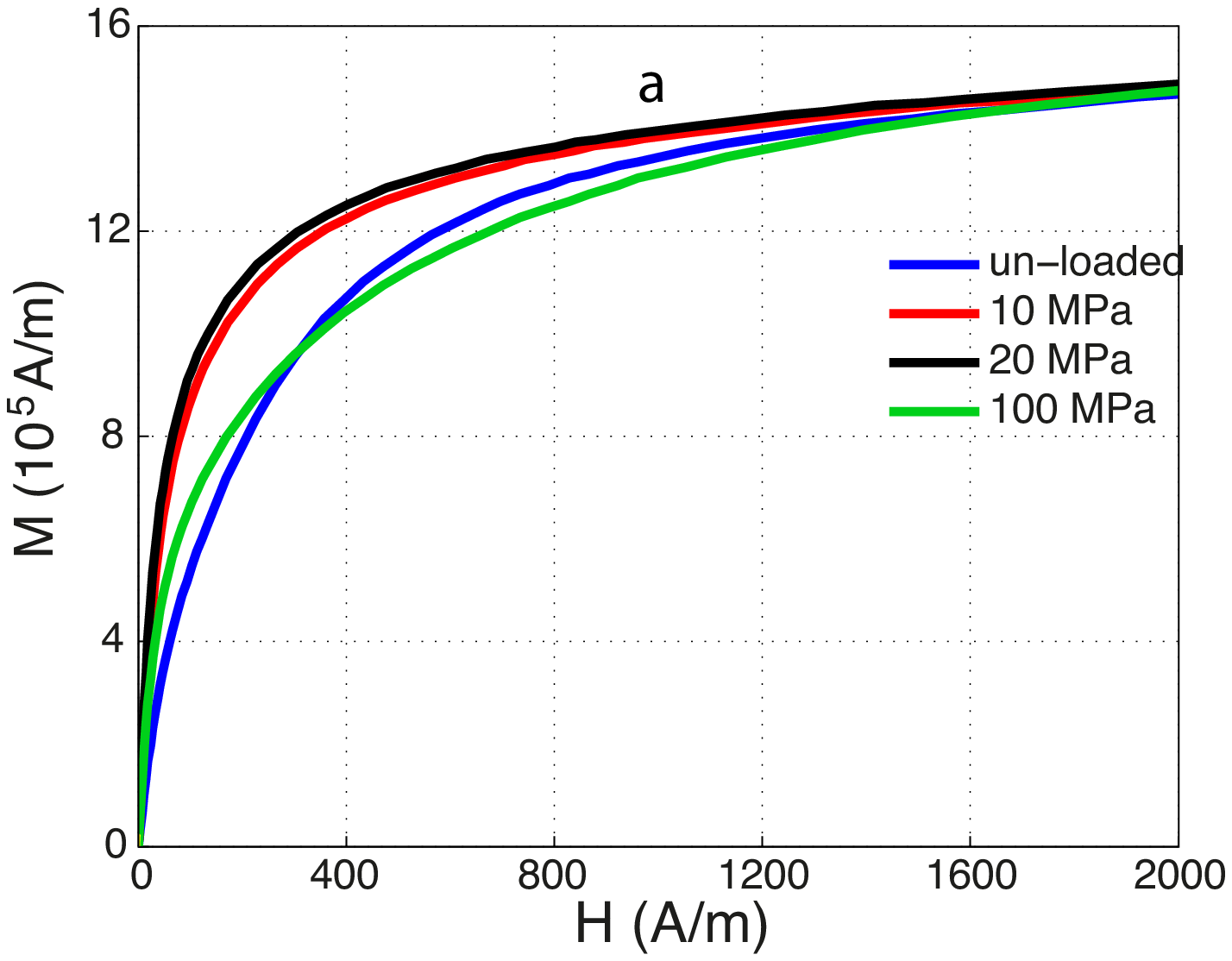}

  \end{subfigure}
  \begin{subfigure}{0.49\textwidth}
    \includegraphics[width=1\textwidth]{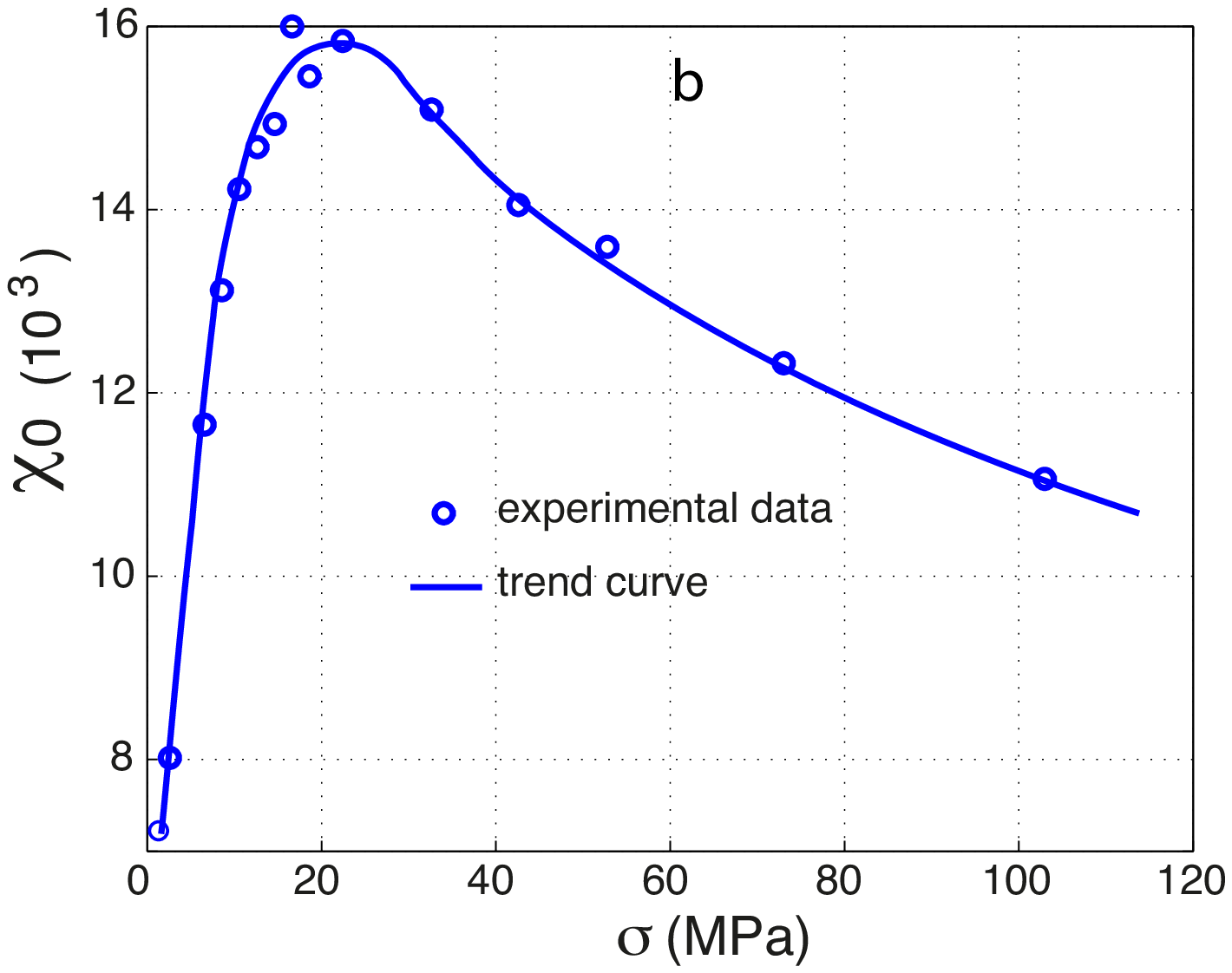}
  \end{subfigure}
  \caption{(a) Anhysteretic magnetic behavior (magnetization $M$ as function of applied magnetic field $H$ in the direction of applied field) and (b) initial anhysteretic susceptibility $\chi_{0}$ ($=\partial M/\partial H\lvert_{H=0}$) of a Fe-27$\%$Co polycrystal\protect\footnotemark \, subjected to increasing uniaxial stress levels applied along the magnetic field direction \cite{savary}.}
  \label{fe27conew}
\end{figure}
\footnotetext{The microstructure of this material is presented in \autoref{fig:fe27codomain}.}

It has been shown in \cite{Hub2019} that a magneto-elastic term as defined by equation \eqref{gibbslin} is unable to describe such a non-monotony effect. This observation suggests the presence of magneto-elastic coupling terms of degree larger than 1 in $\bsigma$, as Mason~\cite{Mas1951} did suppose.

The definition (\ref{gibbslin}) of the magneto-elastic free energy density corresponds indeed to a first order Taylor expansion in $\bsigma$. A second order development in stress has been considered in~\cite{Hub2019}, representing the so-called \textit {morphic effect}~\cite{DTdLac1993}. Its practical implementation involved the identification of the components of a sixth-order tensor $\mathbb{E}$ ($m_{s}^{2}\,\mathbb{E}$ being the \textit{morphic tensor}), replacing expression \eqref{gibbslin} by
\begin{equation}\label{Gibbsnonlin}
  \Psi^{\star\mu\sigma}(\mm,\bsigma)=
  -\bsigma:\,\pmb{\mathcal{E}}:(\mm \otimes \mm)
  -\frac{1}{2}\,\bsigma:(\bsigma:\,\mathbb{E}:(\mm \otimes \mm))
\end{equation}
with then
\begin{equation*}
  \bepsilon^{\mu} = -\frac{\partial \Psi^{\star\mu \sigma}}{\partial \bsigma}=\pmb{\mathcal{E}}:(\mm \otimes \mm)
  +
  \,\bsigma:\mathbb{E}:(\mm \otimes \mm)
\end{equation*}

A sixth-order constitutive tensor in the Euclidian space introduces $3^{6}$ (=729) material parameters. This number can be strongly reduced using the index symmetries of the stress and orientation tensors and grand symmetries (namely $\mathbb{E}_{ijklmn}=\mathbb{E}_{jiklmn}=\mathbb{E}_{ijlkmn}=\mathbb{E}_{ijklnm}$ and $\mathbb{E}_{ijklmn}=\mathbb{E}_{klijmn}$).

A further reduction can be achieved by taking into account the crystal symmetry of the material (using Curie principle \cite{Cur1894}). This work has been performed by Mason~\cite{Mas1951} for cubic symmetry and by Kraus~\cite{Kra1988} for isotropy. The methods employed for these reductions lead however to some issues, not strictly limited to magneto-elasticity (nonlinear piezo-electricity is also concerned):
\begin{itemize}
  \item It is quite difficult to ascertain the exact number of independent material constants (\emph{i.e.} independent components of the morphic tensor);

  \item When several physics are involved, as for the magneto-elastic free energy density $\Psi^{\star\mu \sigma}(\mm, \bsigma)$, it cannot be guaranteed that some (joined) invariants are not forgotten when the degrees of the truncated Taylor expansions are increased;

  \item Related to previous point, the determination of closed form expressions for free energy densities involving high order tensors is often tricky: using constitutive tensors of order higher than 6 becomes almost impossible by this technique;

  \item The same remark applies if a material exhibits lower symmetry (such as orthotropy or as monoclinic symmetry).
\end{itemize}

In the following, we focus our efforts on the determination of the relevant invariants for cubic symmetry of a pair $(\mm,\bsigma)$ with $\mm$ either a vector or a pseudo-vector and $\bsigma$ a second-order symmetric tensor. The general mathematical framework allowing for the correct writing of constitutive models involving material symmetries is known as \emph{Invariant Theory}~\cite{Stu1993,DK2015} and corresponds to the study of tensors subjected to group actions. If this theory goes back to the nineteenth century, the exponential complexity of the calculations (illustrated in~\autoref{sec:algo}) explains that applications are limited. This barrier has now been lifted thanks to current computing power (see for instance~\cite{Bed2007,BP2010,Oli2017,OKA2017,LO2017} for recent computations in this field).

\section{Mathematical modeling of cubic constitutive laws involving a vector and a second-order tensor}
\label{sec:mathematical-modeling}

\subsection{Cubic symmetry, decomposition and projectors}\label{sec:mathmodeling-sub1}

The symmetry group of a material is the subgroup $G$ of isometries leaving its crystal lattice invariant. According to the Curie principle~\cite{Cur1894}, constitutive laws inherit material symmetries. In particular, when these laws can be described by an energy density, this function is invariant by the material symmetry group $G$. \emph{Isotropy} as a possible symmetry of the model corresponds to invariance of the energy density by the full group of orthogonal transformations $\OO(3)$. \emph{Hemitropy} (see Zheng and Boehler in~\cite{ZB1994}), as another possibility, corresponds to invariance under the subgroup $\SO(3)$ of $\OO(3)$, of orientation-preserving orthogonal transformations (\textit{i.e.} rotations).

In this paper, the free energy density $\Psi^\star$ is a function of magnetization $\mm$ and stress $\bsigma$ and we focus on cubic symmetry which is described by the octahedral (or cubic) group $\octa$, defined as
\begin{equation*}
  \octa = \set{g\in \OO(3);\; g \ee_{i} = \pm \ee_{j}},
\end{equation*}
where $(\ee_{i})$ is the canonical orthonormal basis of $\RR^{3}$. The cubic group $\octa$ is composed of $48$ elements: $24$ rotations, which form a sub-group of $\octa$ denoted by $\octa^{+}$, and $24$ orientation-reversing isometries. A cubic invariant free energy density is thus a function $\Psi^\star$ of $\mm$ and $\bsigma$ so that
\begin{equation}\label{eq:energy}
  \Psi^\star(g \star \mm, g \star \bsigma)=\Psi^\star(\mm,\bsigma), \qquad \forall g\in \octa ,
\end{equation}
where the action of $g$ on magnetization and stress writes
\begin{equation}\label{eq:pseudo-vector-action}
  g\star \mm = (\det g)\, g \mm, \quad \text{and} \quad g \star \bsigma = g \bsigma g^{t}.
\end{equation}
when $\mm$ is a pseudo-vector, and
\begin{equation}\label{eq:vector-action}
  g\star \mm = g \mm, \quad \text{and} \quad g \star \bsigma = g \bsigma g^{t}.
\end{equation}
when $\mm$ is a vector.

The pair $(\mm,\bsigma)$ spans the vector space $V=\RR^{3} \oplus \Sym^{2}(\RR^{3})$, where $\Sym^{2}(\RR^{3})$ is the space of symmetric second-order tensors. This vector space $V$ can be furthermore split into \emph{irreducible components}, \textit{i.e.} stable subspaces that contains no other stable subspaces other than themselves or the trivial subspace $\set{0}$. Concerning the action of the cubic group $\octa$ on $V$, the stable subspace $\RR^{3}$ (associated to $\mm$) is itself irreducible but the stable subspace $\Sym^{2}(\RR^{3})$ (associated to $\bsigma$) is made up of three irreducible subspaces (see~\cite{Mug1972,FK1977}). The first one is one-dimensional and spanned by $(\tr\bsigma)\id$. The second one is two-dimensional and consists of diagonal deviators $\bsigma^{d}$. Finally, the third one is three-dimensional and consists of symmetric deviators with vanishing diagonal $\bsigma^{\overline{d}}$. These tensors are defined respectively in the canonical basis $(\ee_{1},\ee_{2},\ee_{3})$, as follows:
\begin{equation}\label{eq:diagonal-antidaigonal}
  \bsigma^{d}=
  \begin{pmatrix}
    \sigma_{11}' & 0            & 0            \\
    0            & \sigma_{22}' & 0            \\
    0            & 0            & \sigma_{33}'
  \end{pmatrix},
  \qquad
  \bsigma^{\overline{d}}=
  \begin{pmatrix}
    0           & \sigma_{12} & \sigma_{13} \\
    \sigma_{12} & 0           & \sigma_{23} \\
    \sigma_{13} & \sigma_{23} & 0
  \end{pmatrix},
\end{equation}
where $ \sigma_{ij}'= \sigma_{ij}-\frac{1}{3} \sigma_{kk}\, \delta_{ij}$. The decomposition of $\bsigma$ into $\octa$-irreducible components (see~\cite{Bertram96}) writes
\begin{equation}\label{eq:decomposition}
  \bsigma =  \bsigma^{d} + \bsigma^{\overline{d}} +\frac{1}{3} (\tr\bsigma) \id ,
\end{equation}
where
\begin{equation*}
  \bsigma' = \bsigma - \frac{1}{3} (\tr\bsigma) \id = \bsigma^{d} + \bsigma^{\overline{d}},
\end{equation*}
is the usual deviatoric part of $\bsigma$. Note that the decomposition of $\bsigma$ into its three irreducible components is orthogonal (with respect to the canonical scalar product on $\Sym^{2}(\RR^{3})$). Let $\bP_{\octa}^{d}$ and $\bP_{\octa}^{\overline{d}}$ denote the orthogonal projectors of $\bsigma$ onto its irreducible components $\bsigma^{d}$ and $\bsigma^{\overline{d}}$ respectively. We have thus
\begin{equation}\label{eq:PdPdbar0}
  \bsigma^{d} := \bP_{\octa}^{d} : \bsigma \quad \text{and} \quad \bsigma^{\overline{d}} :=\bP_{\octa}^{\overline{d}}:\bsigma.
\end{equation}
where \cite{Ryc1984,Fra1995},
\begin{equation*}
  \bP_\octa^{\overline{d}} := \frac{1}{2} \sum_{i< j}\be_{ij}\otimes \be_{ij},
  \qquad
  \be_{ij} :=\ee_{i}\otimes\ee_{j}+\ee_{j}\otimes \ee_{i} \; (i\neq j),
\end{equation*}
and
\begin{equation*}
  \bP_\octa^{d}:=\bJ- \bP_\octa^{\overline{d}},
  \qquad \bJ=\bI-\frac{1}{3} \id \otimes \id,
\end{equation*}
where $\bI$ is the fourth-order identity tensor, with components $I_{ijkl}=\frac{1}{2}(\delta_{ik}\delta_{jl}+\delta_{il}\delta_{jk})$, and $\bJ$ the fourth-order deviatoric projector. However, the cubic material is not necessarily oriented along the directions of the canonical cubic basis $(\ee_{i})$. As a consequence the axes of the underlying cube are not necessarily the $\ee_{i}$ (the symmetry group is a conjugate $g \octa g^{-1}$ of canonical group $\octa$ for some $g\in \SO(3)$) and the cubic projectors $\bP_{\octa}^{d}$ and $\bP_{\octa}^{\overline{d}}$ have to be replaced by the projectors
\begin{equation}\label{eq:PdPdbar}
  \bP^{d} = g\star \bP_\octa^{d},
  \qquad
  \bP^{\overline{d}} = g\star \bP_\octa^{\overline{d}}.
\end{equation}
In terms of components, this writes
\begin{equation*}
  (\bP^{d})_{ijkl}=g_{ip}g_{jq}g_{kr}g_{ls} (\bP_\octa^{d})_{pqrs} \quad \text{and} \quad (\bP^{\overline{d}})_{ijkl}=g_{ip}g_{jq}g_{kr}g_{ls} (\bP_\octa^{\overline{d}})_{pqrs}.
\end{equation*}

\begin{rem}\label{rem:Kelvin-projectors}
  When the stiffness tensor $\bC$ (or its inverse, the compliance tensor $\bS$) of a cubic material is known in an arbitrary orthonormal basis, associated with group $g \octa g^{-1}$, one can obtain the cubic projectors $\bP^{d}$ and $\bP^{\overline{d}}$ without computing the rotation $g$. This task can be done thanks to \emph{Kelvin's spectral decomposition} of the matrix representation $\Sym^{2}(\Sym^{2}(\RR^{2}))$ of $\bS$, $\bP^{d}$ and $\bP^{\overline{d}}$ being then the so-called \emph{Kelvin's projectors} \cite{Ryc1984,Fra1995,DM2011}. This can also be done thanks to its harmonic decomposition \cite{MDC2019}.
\end{rem}

\begin{rem}
  The decomposition into $\octa$-irreducible components also applies to the second-order magnetostriction strain. It writes
  \begin{equation*}
    \bepsilon^{\mu}= \bepsilon^{\mu d}+\bepsilon^{\mu \overline{d}}+\frac{1}{3}\lambda_v\id,
    \qquad
    \begin{cases}
      \bepsilon^{\mu d} = \bP^{d}:\bepsilon^{\mu}
      \\
      \bepsilon^{\mu \overline{d}} = \bP^{\overline{d}}:\bepsilon^{\mu}
    \end{cases}
    ,
  \end{equation*}
  which details as
  \begin{equation*}
    \bepsilon^{\mu} = \frac{3}{2\,m_s^2} \left(\lambda_{100}(\mm \otimes \mm)^{d} +  \lambda_{111}(\mm \otimes \mm)^{\overline{d}} \right) +\frac{1}{3\,m_s^2}\lambda_v\tr (\mm \otimes \mm)\id,
  \end{equation*}
  where
  \begin{equation*}
    (\mm \otimes \mm)^{d} = \bP^{\overline{d}}:(\mm \otimes \mm), \qquad (\mm \otimes \mm)^{\overline{d}}=  \bP^{\overline{d}}:(\mm \otimes \mm),
  \end{equation*}
  $\lambda_{100}$, $\lambda_{111}$ are the magnetostriction constants (they are elongations measured along the $<100>$ and $<111>$ crystallographic directions during a magnetic loading of a perfect demagnetized single crystal) and $\lambda_{v}$ is the so-called volume magnetostriction. In the canonical cubic basis $(\ee_{i})$, one has $\bP^{d}=\bP_{\octa}^{d}$ and $\bP^{\overline{d}}=\bP_{\octa}^{\overline{d}}$, so that the usual expression \cite{Cul1972,DTdLac1993,Hubert2011}
  \begin{equation*}\label{eq:eps:mu}
    \bepsilon^{\mu} = \frac{3}{2}
    \begin{pmatrix}
      \lambda_{100}(\gamma_{1}^{2}-\frac{1}{3}) & \lambda_{111}\gamma_{1}\gamma_{2}         & \lambda_{111}\gamma_{1}\gamma_{3}         \\
      \lambda_{111}\gamma_{1}\gamma_{2}         & \lambda_{100}(\gamma_{2}^{2}-\frac{1}{3}) & \lambda_{111}\gamma_{2}\gamma_{3}         \\
      \lambda_{111}\gamma_{1}\gamma_{3}         & \lambda_{111}\gamma_{2}\gamma_{3}         & \lambda_{100}(\gamma_{3}^{2}-\frac{1}{3})
    \end{pmatrix}
    +\frac{1}{3}\lambda_v\id ,
  \end{equation*}
  is recovered. Note that the volume magnetostriction $\lambda_v$ is negligible for most metallic ferromagnetic materials (for which $\tr \bepsilon^{\mu} \approx 0$).
\end{rem}

\subsection{Polynomial invariants and minimal integrity basis}
\label{sec:invariant-algebra}

In practice, it is more useful to study \emph{polynomial invariant functions} by a group $G$, rather than just arbitrary invariant functions since algorithms are known to compute them. This is not a limitation if the energy density is a smooth function as it can be expanded to a given order into polynomials which inherit its symmetry. Real polynomial functions defined on a vector space $V$ form an \emph{algebra}, denoted by $\RR[V]$ (the sum and the product of polynomial functions are polynomial functions). Given a linear action of a group $G$ on $V$, the polynomial functions on $V$ which are invariant by $G$ is a \emph{subalgebra} of $\RR[V]$ (the sum and the product of invariant polynomial functions are still invariant polynomial functions), denoted by $\RR[V]^{G}$ and called the \emph{invariant algebra} of $V$ under $G$.

It is a fundamental result of invariant theory that, when the group $G$ is finite (like $\octa$) or compact (like $\OO(3)$ or $\SO(3)$), the invariant algebra is \emph{generated} by a finite number of invariant polynomials~\cite{Hilb1993,Stu1993}. This means that each $G$-invariant polynomial function can be rewritten as a polynomial function of these generators. Such a generating set is sometimes called an \emph{integrity basis}. A generating set is called \emph{minimal} if no proper subset of it is a generating set. It is always possible to find a minimal integrity basis made of homogeneous polynomials. A minimal homogeneous integrity basis is not unique but its cardinal and degrees of generators are uniquely defined.

\begin{rem}\label{rem:irreducible-invariants}
  An homogeneous polynomial invariant is called \textit{reducible} if it can be written as the product of two (non constant) homogeneous polynomial invariants, or more generally as a sum of products of two (non constant) homogeneous polynomial invariants. Otherwise, it is called \textit{irreducible}. A minimal integrity basis contains only irreducible invariants.
\end{rem}

\begin{exa}\label{ex:R3-plus-S2-O3}
  A minimal integrity basis of $\RR[V]^{\OO(3)}$, where $V = \RR^{3} \oplus \Sym^{2}(\RR^{3})$ is provided by
  \begin{equation*}
    \norm{\mm}^{2}, \quad \tr \bsigma, \quad \tr \bsigma^{\prime\, 2}, \quad \tr \bsigma^{\prime\, 3}, \quad \mm \cdot \bsigma' \mm, \quad \mm \cdot \bsigma^{\prime\, 2}\, \mm.
  \end{equation*}
\end{exa}

\begin{exa}\label{ex:R3-plus-S2-SO3}
  A minimal integrity basis of $\RR[V]^{\SO(3)}$, where $V = \RR^{3} \oplus \Sym^{2}(\RR^{3})$ is provided by
  \begin{equation*}
    \norm{\mm}^{2}, \quad \tr \bsigma, \quad \tr \bsigma^{\prime\, 2}, \quad \tr \bsigma^{\prime\, 3}, \quad \mm \cdot \bsigma' \mm, \quad \mm \cdot \bsigma^{\prime\, 2}\, \mm, \quad \det(\mm, \bsigma' \mm, \bsigma^{\prime\, 2} \, \mm).
  \end{equation*}
\end{exa}

The decomposition~\eqref{eq:decomposition} of $(\mm,\bsigma)$ into its irreducible components $(\mm,\bsigma^{d},\bsigma^{\overline{d}},\tr\bsigma)$ under the action of the cubic group $\octa$ allows to recast any polynomial function of $(\mm,\bsigma)$ as a polynomial function of its components $(\mm,\bsigma^{d},\bsigma^{\overline{d}},\tr\bsigma)$. The invariance of the free energy density by the cubic group $\octa$, writes
\begin{equation}\label{eq:inv:energie}
  \Psi^\star(g \star \mm,g \star \bsigma^{d},g \star \bsigma^{\overline{d}},g \star \tr\bsigma) = \Psi^\star(\mm,\bsigma^{d},\bsigma^{\overline{d}},\tr\bsigma),\qquad \forall g \in \octa .
\end{equation}

\begin{rem}\label{rem:reduction-process}
  Since $\tr\bsigma$ is an invariant, each invariant polynomial of $(\mm,\bsigma^{d},\bsigma^{\overline{d}},\tr\bsigma)$ which contains $\tr\bsigma$, but not reduced to $\tr\bsigma$ (up to a constant multiplicative factor), is reducible according to remark~\ref{rem:irreducible-invariants}. Since a minimal integrity basis contains only irreducible invariants, a minimal integrity basis of $\RR[V]^{\octa}$ is made of $\tr\bsigma$ together with a minimal integrity basis of the invariant algebra of $\RR[V^{\prime}]^{\octa}$, where $V^{\prime}$ is the vector space spanned by $(\mm,\bsigma^{d},\bsigma^{\overline{d}})$.
\end{rem}

\subsection{Counting the number of independent multi-homogeneous invariants}
\label{sec:Hilbert-series}

A very useful tool in invariant theory is the \emph{Hilbert series}~\cite[section 2.2]{Stu1993} (see also~\cite[Spencer, p. 181--184]{Boe1987}, \cite{NS2002} and \cite{Smi1960,Smi1965,Spe1970a}). Indeed any polynomial invariant can be decomposed uniquely as a sum of \emph{homogeneous polynomial invariants}. In other words, the invariant algebra $\RR[V]^{G}$ can be written as the direct sum
\begin{equation*}
  \RR[V]^{G} = \bigoplus_{n=0}^{\infty} \RR[V]_{n}^{G},
\end{equation*}
where $\RR[V]_{n}^{G}$ is the finite dimensional vector space of homogeneous polynomial invariants of degree $n$. This makes the invariant algebra a \emph{graded algebra} (by the total degree). The Hilbert series encodes the dimension $a_n= \dim \RR[V]^{G}_{n}$ of the different finite dimensional vector spaces $\RR[V]^{G}_{n}$ and is defined as
\begin{equation*}
  H_{\rho}(z) = \sum_{n} a_{n} \, z^{n} ,
  \quad \text{with} \quad  a_{n} = \dim \RR[V]^{G}_{n} .
\end{equation*}
Thus, each coefficient $a_{n}$ is the number of linearly independent homogeneous invariants of degree $n$. The remarkable fact is that the Hilbert series can be computed \textit{a priori}, using the Molien-Weyl formula for finite groups~\cite[Theorem 2.2.1 p.29]{Stu1993}
\begin{equation}\label{eq:Molien-Weyl}
  H_{\rho}(z) = \frac{1}{\abs{G}}\sum_{g \in G} \frac{1}{\det (I-z\rho(g))},
\end{equation}
where $\rho(g)$ is the linear mapping on $V$ defined by
\begin{equation*}
  \rho(g)(\mm,\bsigma) = (g\mm,g \bsigma g^{t}).
\end{equation*}

There are moreover several useful \emph{refinements} of the Hilbert series, when the space $V$ can be decomposed into stable subspaces. For instance, $V = \RR^{3} \oplus \Sym^{2}(\RR^{3})$ ($\mm\in \RR^{3}$, $\bsigma\in \Sym^{2}(\RR^{3})$) splits naturally into two stable vector spaces, one defined by the vector $\mm$ and another defined by the stress $\bsigma$. In that case, any polynomial invariant can be decomposed into invariant polynomials which are simultaneously homogeneous in $\mm$ and $\bsigma$. In other words, this splitting leads to a new decomposition of the invariant algebra $\RR[V]^{G}$ into the direct sum
\begin{equation*}
  \RR[V]^{G} = \bigoplus_{\alpha,\beta} \RR[V]_{\alpha\beta}^{G},
\end{equation*}
where $\RR[V]_{\alpha,\beta}^{G}$ is the finite dimensional vector space of bi-homogeneous polynomial invariants of degree $\alpha$ in $\mm$ and degree $\beta$ in $\bsigma$. To this \emph{bi-graded algebra} structure of $\RR[V]^{G}$ corresponds the \emph{two-variable Hilbert series}
\begin{equation*}
  H_{\rho}(z_{m},z_{\sigma}) = \sum_{\alpha,\beta} a_{\alpha\beta} \, z_{m}^{\alpha}z_{\sigma}^{\beta} ,
  \quad \text{where} \quad
  a_{\alpha\beta} = \dim \RR[V]^{G}_{\alpha\beta} .
\end{equation*}
and the Molien-Weyl formula writes
\begin{equation*}
  H_{\rho}(z_{m},z_{\sigma}) = \frac{1}{\abs{G}}\sum_{g \in G} \frac{1}{\det (I-z_{m}\rho_{m}(g))}\frac{1}{\det (I-z_{\sigma}\rho_{\sigma}(g))},
\end{equation*}
where $\rho_{m}$ is the matrix representation of $G$ on $\mm$ (considered here as a vector) and $\rho_{\sigma}$ is the matrix representation on $\bsigma$,
\begin{equation*}
  \rho_{m}(g) \,\mm=g\star \mm=g\,\mm,
  \qquad
  \rho_{\sigma}(g) \,\bsigma=g\star  \bsigma=g\, \bsigma g^t.
\end{equation*}
For the proper octahedral group $G= \octa^+$, we obtain
\begin{equation}\label{eq:Hilbert-octa-plus-bi-graded}
  \begin{aligned}
    H_{(V,\octa^+)}(z_{m},z_{\sigma}) & = \frac{1}{24} \frac{1}{1-z_{\sigma}} \left\{ \frac{1}{(1-z_{m})^{3}(1-z_{\sigma})^5} + \frac{6}{(1-z_{m})(1+z_{m})^{2}(1-z_{\sigma}^{2})^{2}(1-z_{\sigma})}\right.
    \\
                                      & \quad + \left. \frac{3}{(1+z_{m})(1-z_{m}^{2})(1-z_{\sigma})(1-z_{\sigma}^{2})^{2}} + \frac{8(1-z_{\sigma})}{(1-z_{m}^{3})(1-z_{\sigma}^{3})^{2}} \right .
    \\
                                      & \quad + \left.\frac{6(1-z_{\sigma})}{(1-z_{m})(1+z_{m}^{2})(1-z_{\sigma}^{2})(1-z_{\sigma}^{4})}\right\}
  \end{aligned}
\end{equation}
whereas for the full octahedral group $G=\octa$, we have
\begin{equation*}
  H_{(V,\octa)}(z_{m},z_{\sigma}) = \frac{1}{2} \left(H_{(V,\octa^+)}(z_{m},z_{\sigma})+H_{(V,\octa^+)}(-z_{m},z_{\sigma})\right).
\end{equation*}

\autoref{tab:coeffs} summarizes for the cubic invariant algebra $\RR[V]^\octa$ ($\rho=(V, \octa)$) the number of linearly independent bi-homogeneous invariants for different bi-degrees  in $\mm$ and $\bsigma$. We recover in particular the fact that there are three (so-called purely elastic) invariants of bi-degree $(0,2)$, \emph{i.e.} independent of $\mm$ and quadratic in the symmetric second-order tensor $\bsigma$.

\begin{table}[ht]
  \centering
  \begin{tabular}{cccccccccccc}
    \toprule
    \diagbox{$\mm$}{$\bsigma$} & 0 & 1  & 2  & 3   & 4   & 5   & 6    & 7    & 8    & 9    & 10
    \\
    \midrule
    0                          & 1 & 1  & 3  & 6   & 11  & 18  & 32   & 48   & 75   & 111  & 160
    \\
    2                          & 1 & 3  & 9  & 20  & 42  & 78  & 138  & 228  & 363  & 553  & 819
    \\
    4                          & 2 & 6  & 19 & 44  & 95  & 180 & 323  & 540  & 867  & 1330 & 1980
    \\
    6                          & 3 & 10 & 32 & 78  & 168 & 324 & 585  & 984  & 1584 & 2442 & 3640
    \\
    8                          & 4 & 15 & 49 & 120 & 263 & 510 & 963  & 1560 & 2517 & 3885 & 5802
    \\
    10                         & 5 & 21 & 69 & 172 & 378 & 738 & 1338 & 2268 & 3663 & 5663 & 8463
    \\
    \bottomrule
  \end{tabular}
  \caption{Number of linearly independent bi-homogeneous invariants for different bi-degrees in $(\mm, \bsigma)$ and for $G=\octa$.}
  \label{tab:coeffs}
\end{table}

There is finally a third variation of the Hilbert series that will be useful for us and concerns the space $V^{\prime}=\RR^{3} \oplus \HH^{2}(\RR^{3})$, to which belongs the pair $(\mm, \bsigma')$ and where $\HH^{2}(\RR^{3})$ is the vector space of deviatoric second-order tensors. It is formulated using the \emph{tri-graduation} of $\RR[V^{\prime}]^{G}$ induced by the $\octa$-stable decomposition of $V^{\prime}$ into the variables $(\mm,\bsigma^{d},\bsigma^{\overline{d}})$. This makes the invariant algebra $\RR[V^{\prime}]^{G}$ into a \emph{tri-graded algebra}, \textit{i.e.} each invariant polynomial $p(\mm,\bsigma^{d},\bsigma^{\overline{d}})$ can be decomposed uniquely into the sum of polynomials which are multihomogeneous in $\mm$, $\bsigma^{d}$ and $\bsigma^{\overline d}$ (multihomogeneous simultaneously in all three variables). For instance, the invariant $I_{210}$ in~\autoref{tab:even-invariants} writes
\begin{equation*}
  I_{210} = (\mm \otimes \mm)^{d} : \bsigma^{d} = m_{1}^{2}\sigma_{11}' + m_{2}^{2}\sigma_{22}' + m_{3}^{2} \sigma_{33}',
\end{equation*}
with $\sigma_{11}'+ \sigma_{22}'+ \sigma_{33}'=0$. It is homogeneous of degree $2$ in $\mm$, of degree $1$ in $\bsigma^{d}$ and of degree $0$ in $\bsigma^{\overline{d}}$ and its total degree is $3$. The invariant algebra $\RR[V^{\prime}]^{G}$ decomposes thus into the direct sum
\begin{equation*}
  \RR[V^{\prime}]^{G} = \bigoplus _{\alpha, \beta, \gamma} \RR[V^{\prime}]^{G}_{\alpha \beta \gamma},
\end{equation*}
where $\RR[V^{\prime}]^{G}_{\alpha \beta \gamma}$ is the finite dimensional vector space of polynomial invariants which are homogeneous of degree $\alpha$ in $\mm$, $\beta$ in $\bsigma^{d}$ and $\gamma$ in $\bsigma^{\overline{d}}$. The corresponding Hilbert series writes
\begin{equation*}
  H_{\rho}(z_{m},z_{d},z_{\bar d}) = \sum a_{\alpha \beta \gamma} \, {z_{m}}^{\alpha} {z_{d}}^{\beta} {z_{\bar d}}^{\gamma} ,
  \quad \text{where} \quad
  a_{\alpha \beta \gamma}=\dim \RR[V^{\prime}]^{G}_{\alpha \beta \gamma} ,
\end{equation*}
and the Molien-Weyl formula gives us
\begin{equation*}
  H_{\rho}(z_{m},z_{d},z_{\bar d}) = \frac{1}{\abs{G}}\sum_{g \in G} \frac{1}{\det (I-z_{m}\rho_{m}(g))}\frac{1}{\det (I-z_{d}\rho_{d}(g))} \frac{1}{\det (I-z_{\bar d}\rho_{\bar d}(g))},
\end{equation*}
where, in addition to $\rho_{m}$, $\rho_{d}$ is the matrix representation of $G$ on $\bsigma^{d}$, and $\rho_{\bar d}$ on $\bsigma^{\overline{d}}$,
\begin{equation*}
  \rho_{d}(g) \,\bsigma^{d} = g\star \bsigma^{d} = g\, \bsigma^{d} g^t,
  \qquad
  \rho_{\bar d}(g) \,\bsigma^{\overline{d}} = g\star \bsigma^{\overline{d}} = g \,\bsigma^{\overline{d}} g^t.
\end{equation*}
For $G=\octa^+$, we get
\begin{equation}\label{eq:Hilbert-octa-plus-tri-graded}
  \begin{aligned}
    H_{(V',\octa^+)}(z_{m},z_{d},z_{\bar d}) & = \frac{1}{24} \left( \frac{1}{(1-z_{m})^{3}(1-z_{d})^{2}(1-z_{\bar d})^{3}} + \frac{6}{(1-z_{m})(1+z_{m})^{2}(1-z_{d}^{2})(1+z_{\bar d})(1-z_{\bar d})^{2}} \right.
    \\
                                             & \quad + \left.\frac{3}{(1+z_{m})(1-z_{m}^{2})(1-z_{d})^{2}(1+z_{\bar d})(1-z_{\bar d}^{2})} + \frac{8(1-z_{d})}{(1-z_{m}^{3})(1-z_{d}^{3})(1-z_{\bar d}^{3})} \right .
    \\
                                             & \quad + \left.\frac{6(1-z_{\bar d})}{(1-z_{m})(1+z_{m}^{2})(1-z_{d}^{2})(1-z_{\bar d}^{4})}\right)
  \end{aligned}
\end{equation}
and for $G=\octa$, we have
\begin{equation}\label{eq:Hilbert-octa-tri-graded}
  H_{(V',\octa)}(z_{m},z_{d},z_{\bar d})  = \frac{1}{2} \left(H_{(V',\octa^+)}(z_{m},z_{d},z_{\bar d}) +H_{(V',\octa^+)}(-z_{m},z_{d},z_{\bar d}) \right).
\end{equation}

\begin{rem}
  The coefficients $a_{\alpha \beta \gamma}$ which are the dimensions of the vector spaces $\RR[V^{\prime}]^{G}_{\alpha \beta \gamma}$ are then computed using Taylor's expansion of the rational functions~\eqref{eq:Hilbert-octa-plus-tri-graded} and~\eqref{eq:Hilbert-octa-tri-graded}.
\end{rem}

\subsection{Cubic minimal integrity bases}
\label{sec:integrity-basis}

Generating sets for the invariant algebras $\RR[V]^\octa$ and $\RR[V]^{\octa^{+}}$ for $V=\RR^{3}\oplus \Sym^{2}(\RR^{3})$ seem to have been first proposed by Smith, Smith and Rivlin in 1963~\cite{SSR1963}. The set of generators produced therein for the orientation preserving cubic subgroup $\octa^{+}$ is however not minimal as explained below. To produce a generating set of invariants is one thing, which can be achieved by clever use of mathematics. Verifying that a generating set is minimal requires powerful computation tools (see~\autoref{sec:algo}), and it is easy to \emph{miss certain relations}, as acknowledged by Spencer himself in~\cite[Chapter 8, pages 163-164]{Boe1987}, and thus produce a wrong answer if a systematic and lengthy calculation is not carried out carefully to the end. This explains why an error has been found in~\cite{SSR1963} (one invariant is indeed reducible). Computation tools in the sixties were certainly not sufficient. Another weakness of~\cite{SSR1963} is that generators are expressed using components $\sigma_{ij}$, $m_{i}$ of $\bsigma$ and $\mm$. The complexity of the formulas increase rapidly and it is not always easy to calculate the constitutive model response with an energy density expressed in expanded components.

In the present work, we propose new sets of generators, which are shown to be moreover minimal. In addition, these generators are written in intrinsic form (not in components). They are detailed in \autoref{tab:even-invariants} and \autoref{tab:odd-invariants} where the notations $\ba^{d\,n}:=(\ba^{d})^{n}$ and $\ba^{\overline d\,n}:=(\ba^{\overline d})^{n}$ have been used. Their expressions are thus much more compact and allow calculations --- of their partial derivatives for instance, as in next section --- to be much easier. To obtain these formulas, we have used only two fundamental tensorial operations. The first one is the classical \emph{contraction between two tensors}, on two or more indices. The second one is the \emph{generalized cross product} between two totally symmetric tensors of any orders, which was introduced in~\cite{OKDD2018} (see also \cite{ADDKO2020} for practical examples). Given $\bm{S}^{1} \in \mathbb{S}^p(\RR^{3})$ and $\bm{S}^{2} \in \mathbb{S}^p(\RR^{3})$, it is defined by
\begin{equation}\label{eq:GenCross}
  \bm{S}^{1} \times \bm{S}^{2} = -(\bm{S}^{1} \cdot \bm{\varepsilon} \cdot \bm{S}^{2})^s \in \mathbb{S}^{p+q-1}(\RR^{3}),
\end{equation}
where $(\cdot)^s$ means the total symmetrization and where $\bm{\varepsilon}$ is the \textit{Levi-Civita} third order tensor ($\varepsilon_{ijk}=\det(\ee_{i}, \ee_{j},\ee_{k})$ in any direct orthonormal basis $\ee_{i}$). Using these operations, has been produced a new set of 60 $\octa^{+}$-invariants which are given
in~\autoref{tab:even-invariants} and~\autoref{tab:odd-invariants}. The proofs of the following two theorems are provided in~\autoref{sec:proofs}.

\begin{thm}\label{thm:main-octa-plus}
  The 60 invariants in~\autoref{tab:even-invariants} and~\autoref{tab:odd-invariants} form a \emph{minimal integrity basis} of
  \begin{equation*}
    \RR[\RR^{3} \oplus \Sym^{2}(\RR^{3})]^{\octa^{+}},
  \end{equation*}
  \emph{i.e.} for the orientation-preserving subgroup $\octa^{+}\subset \octa$.
\end{thm}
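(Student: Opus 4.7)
The plan is to prove the theorem by combining the tri-graded Hilbert series \eqref{eq:Hilbert-octa-plus-tri-graded} with a finite-degree verification. Up to the split induced by Remark \ref{rem:reduction-process}, it suffices to treat $V' = \RR^3 \oplus \HH^2(\RR^3)$ under the tri-graduation $(\alpha, \beta, \gamma)$ in $(\mm, \bsigma^d, \bsigma^{\overline d})$, for which the Molien--Weyl formula yields the refined dimensions $a_{\alpha\beta\gamma}$ of each homogeneous multi-component. The first, preliminary step is to check that every element of the proposed set of 60 generators is genuinely $\octa^+$-invariant; this follows directly from the construction, since each polynomial is built from tensor contractions, generalized cross products \eqref{eq:GenCross}, and the canonical cubic projectors $\bP_\octa^d$, $\bP_\octa^{\overline d}$, which are $\SO(3)$-equivariant operations compatible with the cubic projector structure under $\octa^+$.

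The second step is generation. For each multi-degree $(\alpha, \beta, \gamma)$, I would enumerate all monomials in the 60 candidate invariants with that multi-degree and compute the dimension of the subspace they span in $\RR[V']^{\octa^+}_{\alpha\beta\gamma}$. The generation property is equivalent to this dimension equaling the coefficient $a_{\alpha\beta\gamma}$ extracted from the Taylor expansion of \eqref{eq:Hilbert-octa-plus-tri-graded}, for all multi-degrees up to a finite bound $N$. This bound $N$ is a Noether-type upper bound on the degrees of irreducible invariants; producing and justifying it explicitly is the role of Appendix \ref{sec:degree-bounds}. Beyond $N$, no further irreducible generator can occur, so the check up to $N$ closes the argument.

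The third step is minimality. I would order the candidate generators by total degree. For each generator $J$ of multi-degree $(\alpha_0, \beta_0, \gamma_0)$, I would check that the subspace spanned by all products of \emph{strictly earlier} generators (restricted to the same multi-degree) has dimension exactly $a_{\alpha_0 \beta_0 \gamma_0} - 1$, so that $J$ genuinely contributes a new direction and cannot be written as a polynomial combination of lower-degree invariants. Combined with Remark \ref{rem:irreducible-invariants}, this yields the minimality of the 60-element family.

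The main obstacle is computational: the tri-graded enumeration of monomials grows very rapidly with multi-degree, and the symbolic linear-algebra checks take place in large ambient vector spaces of tensorial expressions. This is what Appendix \ref{sec:algo} must handle, presumably by evaluating each candidate monomial at a sufficient number of generic rational sample points of $V'$ and computing exact ranks of the resulting coefficient matrices over $\QQ$. A second subtlety is guaranteeing that the Noether-type bound produced in Appendix \ref{sec:degree-bounds} is both correct and small enough to render the verification tractable; the pole structure of \eqref{eq:Hilbert-octa-plus-tri-graded} gives natural candidates, but a careful argument is needed to rule out hidden irreducible invariants above the chosen cutoff.
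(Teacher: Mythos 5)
Your plan is essentially the paper's proof: split off $\tr\bsigma$ and work on $V'=\RR^{3}\oplus\HH^{2}(\RR^{3})$, check invariance of the candidates, then verify generation and minimality multi-degree by multi-degree by comparing ranks of products of candidates against the coefficients of the tri-graded Hilbert series~\eqref{eq:Hilbert-octa-plus-tri-graded}, up to an a priori degree bound $N$; this is exactly the algorithm of \autoref{sec:algo} run in \emph{Macaulay2}. The one place where your proposal would get stuck is the bound itself. Calling it ``Noether-type'' and hoping to read it off the pole structure of the Hilbert series does not work: Noether's bound for $\octa^{+}$ is $\abs{\octa^{+}}=24$, which the paper explicitly states is computationally out of reach (the cost is exponential in the total degree), and the denominators of the Molien series do not by themselves bound the degrees of generators. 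The paper instead obtains $N=12$ purely group-theoretically: $\octa^{+}\cong\mathfrak{S}_{4}$ has the Klein group of double transpositions as a normal subgroup isomorphic to $D_{2}$ with quotient $\mathfrak{S}_{3}\cong D_{3}$, so Schmid's lemma $\beta(G)\le\beta(G/N)\beta(N)$ together with $\beta(D_{n})=n+1$ gives $\beta(\octa^{+})\le 4\cdot 3=12$ (theorem~\ref{thm:octa-plus-bound}). Without some such argument your finite verification does not terminate in practice, so this is the ingredient you would actually need to supply. A minor further imprecision: in your minimality step the span of products of \emph{strictly earlier} generators at a given multi-degree has dimension $a_{\alpha\beta\gamma}-1$ only when a single new generator lives there; for the multi-degrees carrying two generators (the $^{a}/^{b}$ pairs) the correct criterion is simply that each candidate strictly increases the rank of the reducible subspace, which is what the algorithm checks.
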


\begin{rem}
  Polynomials in~\autoref{tab:even-invariants} are invariant by the full octahedral group $\octa$, and thus by its subgroup $\octa^{+}$. Polynomials in~\autoref{tab:odd-invariants} are only invariant by $\octa^{+}$ and change sign when the central symmetry $I_{c}$ is applied to them. They are indeed odd in magnetization $\mm$ whereas $I_{c}\star\mm=-\mm$.
\end{rem}

\begin{thm}\label{thm:main-octa}
  The 29 invariants in~\autoref{tab:even-invariants}, altogether with  $(J_{111})^{2}$ where
  \begin{equation*}
    J_{111} = \tr \big( (\bsigma^{d} \times \mm)\bsigma^{\overline{d}} \big)
  \end{equation*}
  form a minimal integrity basis of $\RR[\RR^{3} \oplus \Sym^{2}(\RR^{3})]^{\octa}$, \emph{i.e.} for the full octahedral group $\octa$.
\end{thm}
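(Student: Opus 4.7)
The plan is to deduce the theorem from Theorem \ref{thm:main-octa-plus}, exploiting the fact that $\octa = \octa^{+}\sqcup(-\id)\octa^{+}$ while the central inversion $-\id$ acts as $-1$ on the vector $\mm$ and as $+1$ on $\bsigma$. A polynomial $p\in\RR[V]^{\octa^{+}}$ is therefore $\octa$-invariant if and only if it is even in $\mm$, yielding a $\ZZ/2$-grading
\begin{equation*}
  \RR[V]^{\octa^{+}} \;=\; \RR[V]^{\octa} \,\oplus\, M,
\end{equation*}
where $M$ denotes the $\RR[V]^{\octa}$-module of $\octa^{+}$-invariants that are odd in $\mm$. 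Inspection of the generators of Theorem \ref{thm:main-octa-plus} shows that the 29 polynomials in \autoref{tab:even-invariants}, which I denote $E_{1},\dots,E_{29}$, are even in $\mm$ and hence already lie in $\RR[V]^{\octa}$, while the 31 polynomials in \autoref{tab:odd-invariants}, denoted $O_{1},\dots,O_{31}$ (with $J_{111}$ one of them), are odd in $\mm$ and generate $M$ as an $\RR[V]^{\octa}$-module.

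First I would establish the spanning property. Any $p\in\RR[V]^{\octa}$ is \emph{a fortiori} $\octa^{+}$-invariant, hence a polynomial in the $E_{k}$'s and $O_{j}$'s; writing it as a sum of monomials $E^{\alpha}O^{\beta}$ and noting that such a monomial has $\mm$-parity equal to $|\beta|=\sum_{j}\beta_{j}\bmod 2$, evenness of $p$ in $\mm$ forces $|\beta|$ to be even in every surviving term. Consequently $\RR[V]^{\octa}$ is generated, as an algebra, by $E_{1},\dots,E_{29}$ together with all pairwise products $O_{i}O_{j}$. The real work is then to prove that each such product reduces into the subalgebra $\RR[E_{1},\dots,E_{29},J_{111}^{2}]$. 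I would proceed by induction on total degree, using Cayley--Hamilton identities on the deviators $\bsigma^{d}$ and $\bsigma^{\overline{d}}$, contraction and symmetrization identities for the generalized cross product \eqref{eq:GenCross}, and the Newton-type relations between the diagonal entries $\sigma_{ii}'$, producing explicit syzygies $O_{i}O_{j}=P_{ij}(E_{1},\dots,E_{29})+Q_{ij}(E_{1},\dots,E_{29})\,J_{111}^{2}$. At each bi-degree $(\alpha,\beta)$ the dimension $a_{\alpha\beta}$ of $\RR[V]^{\octa}_{\alpha\beta}$, computed from the two-variable Hilbert series $H_{(V,\octa)}(z_{m},z_{\sigma})=\tfrac{1}{2}\bigl(H_{(V,\octa^{+})}(z_{m},z_{\sigma})+H_{(V,\octa^{+})}(-z_{m},z_{\sigma})\bigr)$ built from \eqref{eq:Hilbert-octa-plus-bi-graded}, caps the number of linearly independent $\octa$-invariants and thus signals when enough reductions have been written down.

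The principal obstacle is precisely this syzygy computation: in principle there are $31\cdot 32/2=496$ pairwise products to reduce, and each identity has to be verified by a symbolic calculation in the spirit of \autoref{sec:algo}. For minimality, the $E_{k}$'s remain irreducible in the smaller algebra $\RR[V]^{\octa}$, because any decomposition $E_{k}=\sum_{a}Q_{a}R_{a}$ with non-constant $\octa$-invariant factors would \emph{a fortiori} decompose $E_{k}$ inside $\RR[V]^{\octa^{+}}$, contradicting Theorem \ref{thm:main-octa-plus}. For $J_{111}^{2}$ itself, irreducibility is verified at its bi-degree $(2,4)$: one compares the dimension $a_{2,4}$ furnished by $H_{(V,\octa)}$ with the dimension, in bi-degree $(2,4)$, of the subalgebra generated by $E_{1},\dots,E_{29}$ alone, and checks that the latter falls short by exactly one, so that $J_{111}^{2}$ is forced to be a genuine new irreducible generator. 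Matching the Hilbert series of the algebra generated by these 30 invariants with $H_{(V,\octa)}$ finally closes both the spanning and the minimality arguments.
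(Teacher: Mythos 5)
Your argument is essentially the paper's own: the paper likewise uses the coset decomposition $\octa=\octa^{+}\sqcup I_{c}\,\octa^{+}$ (phrased there via the Reynolds projector) to show that $\RR[V]^{\octa}$ is generated by the $29$ even invariants of \autoref{tab:even-invariants} together with all pairwise products of the $31$ odd invariants of \autoref{tab:odd-invariants}, which yields the \emph{a priori} degree bound $N=18=\deg(J_{900}^{2})$, and then verifies spanning and minimality of the stated $30$ generators computationally against the Hilbert series. Your syzygy-reduction-plus-dimension-count program is precisely what the algorithm of \autoref{sec:algo} carries out, so the proposal is correct and follows the same route.
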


\begin{rem}\label{rem:I222}
  The above expression of $(J_{111})^{2}$ makes use of the generalized cross product \eqref{eq:GenCross}. It can advantageously be replaced by another invariant that is expressed only through tensors contraction. For instance, one can use instead
  \begin{equation}\label{eq:I222}
    I_{222}:=(\mm\otimes \mm)^{\overline{d}} :\big( \bsigma^{d} (\bsigma^{\overline{d}\, 2})^{\overline{d}} \bsigma^{d}\big),
  \end{equation}
  of multi-degree $(2,2,2)$ in $(\mm, \bsigma^{d}, \bsigma^{\overline{d}})$. Indeed,
  \begin{equation}\label{J111square}
    (J_{111})^{2}  = -3 I_{222}+\frac{1}{12}I_{020}I_{002}I_{200}-\frac{3}{2}I_{020}I_{202}^{a}-I_{020}I_{202}^{b}-\frac{1}{2}I_{002}I_{220}+I_{012}I_{210}-\frac{1}{2}I_{022}I_{200}.
  \end{equation}
\end{rem}

\begin{table}[ht]
  \centering
  \begin{tabular}{ccccc}
  \toprule
     & deg($\mm$) & deg($\bsigma$) & Expression                                                                                                                    & Tri-graded notation \\
  \midrule
  1  & 0          & 1              & \footnotesize $\tr\bsigma$                                                                                                 & --          \\
  2  & 0          & 2              & \footnotesize $\bsigma^{\bar d} : \bsigma^{\bar d}$                                                                        & $I_{002}$           \\
  3  & 0          & 2              & \footnotesize $\bsigma^{d}: \bsigma^{d}$                                                                                   & $I_{020}$           \\
  4  & 0          & 3              & \footnotesize $\tr(\bsigma^{\bar d\, 3})$                                                                               & $I_{003}$           \\
  5  & 0          & 3              & \footnotesize $\bsigma^{\bar d\, 2} :\bsigma^{d}$                                                                          & $I_{012}$           \\
  6  & 0          & 3              & \footnotesize $\tr(\bsigma^{d\,3})$                                                                                    & $I_{030}$           \\
  7  & 0          & 4              & \footnotesize $ (\bsigma^{\bar d\, 2})^{\bar d} : ( \bsigma^{\bar d\, 2})^{\bar d}$
     & $I_{004}$       \\
  8  & 0          & 4              & \footnotesize $ \tr( \bsigma^{\bar d}  \bsigma^{d} \bsigma^{\bar d}\bsigma^{d} )$            & $I_{022}$           \\
  9  & 0          & 5              & \footnotesize $\big( \bsigma^{\bar d} (\bsigma^{\bar d\, 2})^{\bar d} \bsigma^{\bar d}\big):\bsigma^{d}$                & $I_{014}$           \\
  10  & 2          & 0              & \footnotesize $ \Vert \mm \Vert^{2}$                                                                                       & $I_{200}$           \\
  11 & 2          & 1              & \footnotesize $(\mm \otimes \mm)^{\bar d}: \bsigma^{\bar d}$                                          & $I_{201}$           \\
  12 & 2          & 1              & \footnotesize $ (\mm \otimes \mm)^d : \bsigma^{d} $                                                                        & $I_{210}$           \\
  13 & 2          & 2              & \footnotesize $(\mm \otimes \mm)^{d}: \bsigma^{\bar d\, 2} $                                                               & $I_{202}^{a}$       \\
  14 & 2          & 2              & \footnotesize $ (\mm \otimes \mm)^{\bar d}:\bsigma^{\bar d\, 2} $                                                          & $I_{202}^{b}$       \\
  15 & 2          & 2              & \footnotesize $(\mm \otimes \mm)^{\bar d} :(\bsigma^{\bar d}\bsigma^{d})$                                                  & $I_{211}$           \\
  16 & 2          & 2              & \footnotesize $(\mm \otimes \mm)^d: \bsigma^{d\, 2}$                                                                       & $I_{220}$           \\
  17 & 2          & 3              & \footnotesize $(\mm \otimes \mm)^{\bar d}:\big((\bsigma^{\bar d\, 2})^{\bar d}\bsigma^{\bar d} \big)$                      & $I_{203}$           \\
  18 & 2          & 3              & \footnotesize $(\mm \otimes \mm)^d:\big((\bsigma^{\bar d\, 2})^{d}\bsigma^{d} \big)$                                       & $I_{212}^{a}$       \\
  19 & 2          & 3              & \footnotesize $(\mm \otimes \mm)^{\bar d}:\big((\bsigma^{\bar d\, 2})^{\bar d}\bsigma^{d} \big)$                           & $I_{212}^{b}$       \\
  20 & 2          & 3              & \footnotesize $(\mm \otimes \mm)^{\bar d}:\big(\bsigma^{d}\bsigma^{\bar d}\bsigma^{d}\big)$                                & $I_{221}$           \\
  21 & 2          & 4              & \footnotesize $(\mm \otimes \mm)^{d}:\big(\bsigma^{\bar d}(\bsigma^{\bar d\, 2})^{\bar d}\bsigma^{\bar d}\big)$            & $I_{204}$           \\
  22 & 2          & 4              & \footnotesize $(\mm \otimes \mm)^{\bar d}: \big((\bsigma^{\bar d\, 2})^{d} \bsigma^{\bar d }\bsigma^{d} \big)$             & $I_{213}$           \\
  23 & 4          & 0              & \footnotesize $ (\mm \otimes \mm)^{\bar d}:(\mm \otimes \mm)^{\bar d}$                                                     & $I_{400}$           \\
  24 & 4          & 1              & \footnotesize $ (\mm \otimes \mm)^{\bar d\, 2}:\bsigma^{\bar d} $                    & $I_{401}$           \\
  25 & 4          & 1              & \footnotesize $(\mm \otimes \mm)^{\bar d\, 2}:\bsigma^{d} $                         & $I_{410}$           \\
  26 & 4          & 2              & \footnotesize $(\mm \otimes \mm)^{\bar d\, 2}:\big(\bsigma^{\bar d\, 2}\big)^{\bar d}$     & $I_{402}$           \\
  27 & 4          & 2              & \footnotesize $(\mm \otimes \mm)^{\bar d\, 2}:\big(\bsigma^{d}\bsigma^{\bar d}\big)$         & $I_{411}$           \\
  28 & 6          & 0              & \footnotesize $ \tr\big( (\mm \otimes \mm)^{\bar d\, 3} \big)$                                                       & $I_{600}$           \\
  29 & 6          & 1              & \footnotesize $\tr \big( (\mm \otimes \mm)^{d}(\mm \otimes \mm)^{\bar d} (\mm \otimes \mm)^{d}\bsigma^{\bar d}  \big)$ & $I_{601}$           \\
  \bottomrule
\end{tabular}

  \caption{$\octa^{+}$-invariants without cross product ($\textrm{deg}(\mm)=\alpha$,  $\textrm{deg}(\bsigma)=\beta+\gamma$ except line $1$).}
  \label{tab:even-invariants}
\end{table}

\begin{table}[ht]
  \centering
  \begin{tabular}{ccccc}
  \toprule
     & deg($\mm$) & deg($\bsigma$) & Expression                                                                                                                                                                                    & Tri-graded notation
  \\
  \midrule
  30 & 1          & 2              & \footnotesize $(\bsigma^{d} \times \mm):\bsigma^{\bar d}$                                                                                                                & $J_{111}$
  \\
  31 & 1          & 3              & \footnotesize $ \big((\bsigma^{\bar d\, 2})^{\bar d} \times \mm\big):\bsigma^{\bar d} $                                                                                          & $J_{103}$           \\
  32 & 1          & 3              & \footnotesize $ \big(\bsigma^{\bar d\, 2} \times \mm\big):\bsigma^{d} $                                                                                                          & $J_{112}$
  \\
  33 & 1          & 3              & \footnotesize $\big(\bsigma^{\bar d}\times (\bsigma^{d}\mm)\big):\bsigma^{d}$                                                                                                    & $J_{121}$
  \\
  34 & 1          & 4              & \footnotesize $\big(\bsigma^{\bar d} \times (\bsigma^{\bar d}\mm)\big)^{\bar d}:\bsigma^{\bar d\, 2} $                                                                            & $J_{104}$           \\
  35 & 1          & 4              & \footnotesize $\big(((\bsigma^{\bar d\, 2})^{d}\bsigma^{d}) \times \mm\big):\bsigma^{\bar d}$                                                                                    & $J_{113}^{a}$
  \\
  36 & 1          & 4              &  \footnotesize $ \big( (\bsigma^{\bar d\, 2})^{\bar d} \times (\bsigma^{d}\mm)\big):\bsigma^{\bar d}$                                                                              & $J_{113}^{b}$       \\
  37 & 1          & 4              & \footnotesize $\big((\bsigma^{\bar d\, 2})^{\bar d} \times (\bsigma^{d}\mm)\big):\bsigma^{d}$                                                                                   & $J_{122}$           \\
  38 & 1          & 4              & \footnotesize $ \big(\bsigma^{d\, 2} \times(\bsigma^{d}\mm)\big): \bsigma^{\bar d}$                                                                                 & $J_{131}$           \\
  39 & 1          & 5              & \footnotesize $\big((\bsigma^{\bar d\, 2})^{\bar d} \times (\bsigma^{\bar d}\mm)\big):\bsigma^{\bar d\, 2} $                                                           & $J_{105}$           \\
  40 & 1          & 5              & \footnotesize $\big(  (\bsigma^{\bar d\, 2})^{\bar d} \times (\bsigma^{d}\bsigma^{\bar d} \mm) \big): \bsigma^{\bar d}$                                                        & $J_{114}$           \\
  41 & 1          & 5              & \footnotesize $\left[\big(\bsigma^{d} (\bsigma^{\bar d\, 2})\bsigma^{ d}\big)^{\bar d} \times  \mm \right]:\bsigma^{\bar d}$                                         & $J_{123}$           \\
  42 & 1          & 5              & \footnotesize $\big(\bsigma^{\bar d\, 2} \times (\bsigma^{d}\mm) \big):\bsigma^{d\, 2}$                                                                          & $J_{132}$           \\
  43 & 3          & 1              & \footnotesize $(\mm \otimes \mm)^{\bar d}:(\bsigma^{\bar d} \times \mm) $                                                                                                 & $J_{301}$           \\
  44 & 3          & 2              & \footnotesize $(\mm \otimes \mm)^{d} :(\bsigma^{\bar d} \times (\bsigma^{\bar d}\mm))$                                                                                    & $J_{302}$           \\
  45 & 3          & 2              & \footnotesize $(\mm \otimes \mm)^{\bar d} :( \bsigma^{\bar d} \times (\bsigma^{d}\mm))$                                                                                   & $J_{311}^{a}$       \\
  46 & 3          & 2              & \footnotesize $(\mm \otimes \mm)^d:\big( \bsigma^{\bar d} \times (\bsigma^{d}\mm) \big)$                                                                                & $J_{311}^{b}$       \\
  47 & 3          & 3              & \footnotesize $(\mm \otimes \mm)^d :\big(\bsigma^{\bar d} \times \big((\bsigma^{\bar d\, 2})^{\bar d} \mm \big))$                                                          & $J_{303}^{a}$       \\
  48 & 3          & 3              & \footnotesize $(\mm \otimes \mm)^{\bar d} :\left[\bsigma^{\bar d} \times \big((\bsigma^{\bar d\, 2})^{\bar d} \mm \big)\right]$                                                   & $J_{303}^{b}$       \\
  49 & 3          & 3              & \footnotesize $  \big((\mm \otimes \mm)^{\bar d} \times (\bsigma^{d}\mm) \big):\big(\bsigma^{\bar d \,2}\big)^{d}$                                                          & $J_{312}^{a}$       \\
  50 & 3          & 3              & \footnotesize $\big((\mm \otimes \mm)^{\bar d} \times (\bsigma^{d}\mm) \big):\big(\bsigma^{\bar d \,2}\big)^{\bar d}$                                                       & $J_{312}^{b}$       \\
  51 & 3          & 3              & \footnotesize $\left[ (\bsigma^{d}(\mm \otimes \mm)^{d}) \times (\bsigma^{d}\mm) \right]:\bsigma^{\bar d}$                                                & $J_{321}$           \\
  52 & 3          & 3              & \footnotesize $ \big((\mm \otimes \mm)^{\bar d} \times (\bsigma^{d\, 2}\mm) \big):\bsigma^{d}$                                                                         & $J_{330}$           \\
  53 & 5          & 1              & \footnotesize $ \big[(\mm \otimes \mm)^{\bar d} \times ((\mm \otimes \mm)^{\bar d}\mm)\big]:\bsigma^{\bar d}$                                                                     & $J_{501}$           \\
    54 & 5          & 1              & \footnotesize $\big[(\mm \otimes \mm)^{\bar d} \times ((\mm \otimes \mm)^{\bar d}\mm ) \big]:\bsigma^{d}$                                                   & $J_{510}$           \\
  55 & 5          & 2              & \footnotesize $\big[(\mm \otimes \mm)^d \times ((\mm \otimes \mm)^{\bar d}\bsigma^{\bar d}\mm ) \big]:\bsigma^{\bar d}$                                     & $J_{502}$           \\
  56 & 5          & 2              & \footnotesize $\big[(\mm \otimes \mm)^d \times ((\mm \otimes \mm)^{\bar d}\bsigma^{d}\mm ) \big]:\bsigma^{\bar d}$                                           & $J_{511}$           \\
  57 & 5          & 2              & \footnotesize $\big[(\mm \otimes \mm)^{\bar d} \times ((\mm \otimes \mm)^{\bar d}\bsigma^{d}\mm ) \big]:\bsigma^{d}$                                        & $J_{520}$           \\
  58 & 7          & 1              & \footnotesize $\big[(\mm \otimes \mm)^d \times \big(((\mm \otimes \mm)^{\bar d}(\mm \otimes \mm)^{\bar d})^{\bar d}\mm\big) \big]:\bsigma^{\bar d}$    & $J_{701}$           \\
  59 & 7          & 1              & \footnotesize $\big[(\mm \otimes \mm)^{\bar d} \times \big(((\mm \otimes \mm)^{\bar d}(\mm \otimes \mm)^{\bar d})^{ d}\mm\big) \big]:\bsigma^{ d}$ & $J_{710}$           \\
  60 & 9          & 0              & \footnotesize $\big[(\mm \otimes \mm)^{\bar d} \times \big(((\mm \otimes \mm)^{\bar d}(\mm \otimes \mm)^{\bar d})^{ d}\mm\big) \big]:(\mm \otimes \mm)^{ d}$                                       & $J_{900}$           \\
  \bottomrule
\end{tabular}

  \caption{$\octa^{+}$-invariants with cross product ($\textrm{deg}(\mm)=\alpha$,  $\textrm{deg}(\bsigma)=\beta+\gamma$).}
  \label{tab:odd-invariants}
\end{table}

\begin{rem}
  In~\autoref{tab:even-smith-invariants} and~\autoref{tab:odd-smith-invariants}, we have translated the integrity bases for $\octa^{+}$ and $\octa$ provided by Smith et al in the new integrity basis provided by~\autoref{tab:even-invariants} and~\autoref{tab:odd-invariants}. The original set of generators proposed by Smith et al is expressed using components of the tensors with the following notations: $E_{i}=m_{i}$ and $g_{ij}=\sigma_{ij}$. The notations $I_{k}$, $L_{k}$ and $J_{k}K_l$ used in the tables are those introduced in Smith--Smith--Rivlin paper~\cite[Section 6]{SSR1963}. In tables \ref{tab:even-smith-invariants} and \ref{tab:odd-smith-invariants}, all Smith--Smith--Rivlin invariants restricted to $(\bsigma',\mm)$ are expressed in terms of $I_{\alpha\beta\gamma}$ and $J_{\alpha\beta\gamma}$.
\end{rem}

The minimal number of generators of $\RR[V]^{\octa^{+}}$ is 60, \textit{i.e.} one less than the number of generators proposed by Smith et al in 1963. One of these invariants is therefore reducible. In order to check which one rewrites polynomially as function of the others, the algorithm in~\autoref{sec:algo} has been applied to the list $\mathcal L$ of $\octa^{+}$-invariants provided in~\autoref{tab:even-smith-invariants} and~\autoref{tab:odd-smith-invariants} (with the bound $N=12$ for the highest total degree to be checked, see theorem~\ref{thm:octa-plus-bound}). As a result, a basis of the vector space of homogeneous invariants of multi-degree $(2,2,2)$ in $(\mm, \bsigma^{d}, \bsigma^{\overline{d}})$ is spanned by
\begin{equation*}
  I_{2}I_4I_{10} ,\quad  I_{2}I_{15} , \quad I_{2}I_{16} , \quad I_4I_{18} , \quad I_{7}I_{14} , \quad I_9I_{10} , \quad (L_{1})^{2}.
\end{equation*}
where $I_k$ are the Smith--Smith--Rivlin invariants of \autoref{tab:even-smith-invariants}. Indeed, $I_{25}$ can be recast as a function of the other Smith et al invariants as
\begin{equation*}
  I_{25} = \frac{1}{6} \left( I_{2}I_4I_{10} - 3I_{2}I_{15} + 4I_{2}I_{16} - I_4I_{18} - I_{7}I_{14} - I_9I_{10} - (L_{1})^{2} \right).
\end{equation*}

\begin{rem}
  In~\autoref{tab:even-invariants} and~\autoref{tab:odd-invariants}, almost all invariants have different multi-degrees $(\alpha, \beta, \gamma)$ in $(\mm, \bsigma^{d}, \bsigma^{\overline d})$. When there is no ambiguity and only one invariant of multi-degree $(\alpha, \beta, \gamma)$, it is denoted as $I_{\alpha \beta \gamma}$ (\autoref{tab:even-invariants}) and $J_{\alpha \beta \gamma}$ (\autoref{tab:odd-invariants}) in the fifth column. However, some pairs of invariants have the same multi-degree. This is the case for the pairs of lines $(13,14)$, $(18,19)$, $(35,36)$, $(45,46)$, $(47,48)$, and $(49,50)$. An exponent $^{a}$ or $^{b}$ has been added for a clear distinction. For instance, in~\autoref{tab:even-invariants}, the invariant in line 13 has been denoted by $I^{a}_{202}$; it has been denoted $I^{b}_{202}$ in line $14$.
\end{rem}

\begin{table}[ht]
  \centering
  \begin{tabular}{ccccc}
  \toprule
     & deg($\mm$) & deg($\bsigma$) & Evaluation                                                                                                                    & Smith et al Notation
  \\
  \midrule
  1  & 0          & 1              & \footnotesize $ \tr \bsigma $                                                                                                  & $I_{1}$
  \\
  2  & 0          & 2              & \footnotesize $-\frac{1}{2} I_{020} $                                                                                      & $I_{2}$
  \\
  3  & 0          & 2              & \footnotesize $\frac{1}{2} I_{002} $                                                                                       & $I_{4}$
  \\
  4  & 0          & 3              & \footnotesize $\frac{1}{3} I_{030} $                                                                                                  & $I_{3}$
  \\
  5  & 0          & 3              & \footnotesize $\frac{1}{6} I_{003} $                                                                                       & $I_{6}$
  \\

  6  & 0          & 3              & \footnotesize $-I_{012}$                                                                                                   & $I_{7}$
  \\
  7  & 0          & 4              & \footnotesize $\frac{1}{2} I_{004} $                                                                                       & $I_{5}$
  \\
  8  & 0          & 4              & \footnotesize $\frac{1}{2} I_{022} $                                                                                       & $I_{9}$
  \\
  9  & 0          & 5              & \footnotesize $\frac{1}{2} I_{014} $                                                                                       & $I_{8}$
  \\
  10  & 2          & 0              & \footnotesize $ I_{200} $                                                                                                  & $I_{10}$
  \\
  11 & 2          & 1              & \footnotesize $\frac{1}{2}I_{201}$                                                                                         & $I_{13}$
  \\
  12 & 2          & 1              & \footnotesize $ I_{210} $                                                                                                  & $I_{14}$
  \\
  13 & 2          & 2              & \footnotesize $\frac{1}{6}I_{200}I_{002}-I_{202}^{a} $                                                                    & $I_{15}$
  \\
  14 & 2          & 2              & \footnotesize $\frac{1}{2} I_{202}^{b} $                                                                                   & $I_{16}$
  \\
  15 & 2          & 2              & \footnotesize $-I_{211}$                                                                                                   & $I_{17}$
  \\
  16 & 2          & 2              & \footnotesize $-\frac{1}{6}I_{020}I_{200}+I_{220}$                                                                       & $I_{18}$
  \\
  17 & 2          & 3              & \footnotesize $ \frac{1}{4}I_{002}I_{201}-I_{203}$                                                                         & $I_{19}$
  \\
  18 & 2          & 3              & \footnotesize $\frac{1}{6}I_{210}I_{002}-\frac{1}{3}I_{012}I_{200}-I_{212}^{a} $                                           & $I_{20}$
  \\
  19 & 2          & 3              & \footnotesize $-I_{212}^{b} $                                                                                              & $I_{21}$
  \\
  20 & 2          & 3              & \footnotesize $\frac{1}{2} I_{221} $                                                                                       & $I_{22}$
  \\
  21 & 2          & 4              & \footnotesize $\frac{1}{2}I_{204}+\frac{1}{6}I_{200}I_{004} $                                                              & $I_{23}$
  \\
  22 & 2          & 4              & \footnotesize $-\frac{1}{6}I_{002}I_{211}-\frac{1}{4}I_{012}I_{201}-\frac{1}{2}I_{213}$                                    & $I_{24}$
  \\
  23 & $2$        & $4$            & \footnotesize 
  $\frac{1}{2}I_{222}$ & $I_{25}$\\
  24 & 4          & 0              & \footnotesize $ \frac{1}{2}I_{400}$                                                                                        & $I_{11}$
  \\
  25 & 4          & 1              & \footnotesize $ \frac{1}{2}I_{401}$                                                                                        & $I_{26}$
  \\
  26 & 4          & 1              & \footnotesize $ -I_{410}$                                                                                                  & $I_{27}$
  \\
  27 & 4          & 2              & \footnotesize $ \frac{1}{2}I_{402}$                                                                                        & $I_{28}$
  \\
  28 & 4          & 2              & \footnotesize $ -I_{411}$                                                                                                  & $I_{29}$
  \\
  29 & 6          & 0              & \footnotesize $ \frac{1}{6}I_{600}$                                                                                        & $I_{12}$
  \\
  30 & 6          & 1              & \footnotesize $ \frac{1}{9} I_{201}(I_{200})^{2}-\frac{1}{6}I_{401}I_{200}+\frac{1}{2}I_{601} $                            & $I_{30}$
  \\
  \bottomrule
\end{tabular}

  \caption{$\mm$-even Smith--Smith--Rivlin invariants (evaluated for $\bsigma'$ except line 1).}
  \label{tab:even-smith-invariants}
\end{table}

\begin{table}[ht]
  \centering
  \begin{tabular}{ccccc}
  \toprule
     & deg($\mm$) & deg($\bsigma$) & Evaluation                                                                                                      & Smith et al notation \\
  \midrule
  31 & 1          & 2              & \footnotesize $-J_{111}$                                                                                     & $L_1$            \\
  32 & 1          & 3              & \footnotesize $-J_{103}$                                                                                     & $L_2$            \\
  33 & 1          & 3              & \footnotesize $J_{112}$                                                                                      & $L_3$            \\
  34 & 1          & 3              & \footnotesize $J_{121}$                                                                                      & $L_4$            \\
  35 & 1          & 4              & \footnotesize $-J_{104}$                                                                                     & $L_5$            \\
  36 & 1          & 4              & \footnotesize $-2J_{113}^{a}+J_{113}^{b}-\frac{1}{6}I_{002}J_{111}$                                           & $L_6$            \\
  37 & 1          & 4              & \footnotesize $J_{122}$                                                                                      & $L_7$            \\
  38 & 1          & 4              & \footnotesize $ 3J_{113}^{a}-3J_{113}^{b}$                                                      & $J_1K_1$         \\
  39 & 1          & 4              & \footnotesize $ \frac{1}{2}I_{020}J_{111}+3J_{131}$                                                                     & $J_1K_2$         \\
  40 & 1          & 5              & \footnotesize $-\frac{1}{4}  J_{103}I_{002}-\frac{1}{2}J_{105}  $                                             & $L_8$            \\
  41 & 1          & 5              & \footnotesize $-\frac{1}{2}I_{002}J_{112}-\frac{1}{2}I_{003}J_{111}-3J_{114} $                               & $J_2K_1$         \\
  42 & 1          & 5              & \footnotesize $\frac{1}{2}I_{002}J_{121}+\frac{1}{2}I_{020}J_{103}+3J_{123} $                                & $J_1K_4$         \\
  43 & 1          & 5              & \footnotesize $-\frac{1}{2}I_{020}J_{112}-3J_{132} $                                                         & $J_2K_2$         \\
  44 & 3          & 1              & \footnotesize $J_{301}$                                                                                      & $L_9$            \\
  45 & 3          & 2              & \footnotesize $J_{302}$                                                                                      & $L_{10}$         \\
  46 & 3          & 2              & \footnotesize $ -J_{111}I_{200}+2J_{311}^{a}-J_{311}^{b} $                                                     & $L_{11}$         \\
  47 & 3          & 2              & \footnotesize $ 3J_{311}^{a}-I_{200}J_{111}-3J_{311}^{b}$                                                    & $J_1K_7$         \\
  48 & 3          & 3              &  \footnotesize $ -\frac{1}{2}\left( J_{103}I_{200}+J_{303}^{a}+J_{303}^{b}\right)$                             & $L_{12}$         \\
  49 & 3          & 3              &  \footnotesize $\frac{1}{2} \left( J_{103}I_{200}+J_{301}I_{002}+3(J_{303}^b-J_{303}^a)\right)$                                         & $J_1K_8$         \\
  50 & 3          & 3              & \footnotesize $-J_{312}^{a} $                                                                                & $J_0K_1$         \\
  51 & 3          & 3              & \footnotesize $J_{330} $                                                                                     & $J_0K_2$         \\

  52 & 3          & 3              & \footnotesize $I_{200}J_{112}-\frac{3}{2}\left( \frac{1}{2} I_{201}J_{111}+J_{312}^{a}+J_{312}^{b} \right) $ & $J_2K_7$         \\
  53 & 3          & 3              & \footnotesize $ \frac{1}{2}I_{020}J_{301}+3J_{321}$                                                          & $J_1K_{10}$      \\
  54 & 5          & 1              & \footnotesize $J_{501}$                                                                                      & $L_{13}$         \\
  55 & 5          & 1              & \footnotesize $-J_{510}$                                                                                     & $J_0K_7$         \\
  56 & 5          & 2              & \footnotesize $-J_{520} $                                                                                    & $J_0K_{10}$      \\
  57 & 5          & 2              & \footnotesize $J_{502}-\frac{1}{2}I_{201}J_{301} $                                                           & $J_0K_8$         \\
  58 & 5          & 2              & \footnotesize $-I_{200}J_{311}^a+\frac{1}{2}I_{400}J_{111}-2I_{210}J_{301}+3J_{511}$                         & $J_1K_{21}$      \\
  59 & 7          & 1              & \footnotesize $I_{400}J_{301}+I_{200}J_{501}-\frac{3}{2}J_{701} $                                            & $J_1K_6$         \\
  60 & 7          & 1              & \footnotesize $-J_{710} $                                                                                    & $J_0K_{21}$      \\
  61 & 9          & 0              & \footnotesize $-J_{900}$                                                                                     & $J_0K_6$         \\
  \bottomrule
\end{tabular}
  \caption{$\mm$-odd Smith--Smith--Rivlin invariants (evaluated for $\bsigma'$).}
  \label{tab:odd-smith-invariants}
\end{table}

\clearpage

\section{Application to cubic magneto-mechanical coupling at the domain scale}
\label{sec:domain-scale-coupling}

A first objective here is to obtain, in closed forms, general expressions of the  magneto-mechanical free enthalpy density for cubic symmetry such as those given in \autoref{sec:mechanical-modeling}, \emph{i.e.} as  polynomial expansions truncated at a given degree $\deg(\bsigma)$ in stress for instance. A second objective is to obtain the associated constitutive laws in an intrinsic manner (thanks to the tensorial expressions provided by invariants and to the use of the cubic projectors
$\bP^{d}$ and $\bP^{\overline d}$).

We have to point out that, in the physical problem studied, the sought enthalpies $\Psi^\star(\bsigma, \mm)$ are functions of the stress tensor $\bsigma$ and of the magnetization pseudo-vector $\mm$ (the action of the isometries $g$ on magnetization and stress being given by \eqref{eq:pseudo-vector-action}). Rather than the cubic subgroup $\octa$ of isometries (with $\mm$ a pseudo-vector), we have to consider the \emph{magnetic point group} $\octa^\varepsilon = \octa \times \set{\pm1}$ \cite{SC1984}. In practice, however, it boils down to the fact that $\Psi$ is $\octa$--invariant under the standard action~\eqref{eq:vector-action} (rather than~\eqref{eq:pseudo-vector-action})~\cite{WG2004}, implying that the sought enthalpies are even in magnetization $\mm$ \cite{Mau1990}. One can then use the integrity basis of $\RR[\RR^{3}\oplus \Sym^{2}(\RR^{3})]^{\octa}$ for full octahedral group $\octa$  in order to derive a polynomial form of cubic magneto-mechanical free energy density.

In order to shorten formulas,
\begin{itemize}
  \item the terms depending only on stress tensor $\bsigma$ are gathered in the pure elastic free energy density
        \begin{equation}\label{eq:psielas}
          \Psi^{\star e}=-\frac{1}{2}\bsigma:\bS:\bsigma
          ,\qquad
          \bepsilon^{e}=-\frac{\partial \Psi^{\star e}}{\partial \bsigma}=\bS: \bsigma,
        \end{equation}
        with $ \bepsilon^{e}$ the elastic strain tensor and where the compliance tensor $\bS$ satisfies $S_{ijkl}=S_{jikl}=S_{ijlk}=S_{klij}$,

  \item the terms depending only on the magnetization $\mm$ are gathered in the pure magnetic free energy density $\Psi^{\star\mu} (\mm)$ (classically of degree six in magnetization),

  \item the so-called first order magneto-mechanical terms, linear in stress and quadratic in $\mm$, are gathered in the free energy density
        $\Psi^{\star\mu \sigma}_{1} (\mm, \bsigma)$,

  \item the so-called second order magneto-mechanical terms, quadratic in stress and quadratic in $\mm$, are gathered in $\Psi^{\star\mu \sigma}_{2} (\mm, \bsigma)$ (defined in~\eqref{gibbslin} and~\eqref{Gibbsnonlin}),

  \item the dependency with respect to pair $(\mm, \bsigma)$ is written through the first cubic invariants $I_{\alpha\beta\gamma}$ of \autoref{tab:even-invariants} (and $I_{222}$ and
        $\tr \bsigma$), where the multi-degree in $(\mm, \bsigma^{d}, \bsigma^{\overline d})$  is $(\alpha, \beta, \gamma)$.
\end{itemize}
We have then
\begin{equation}\label{eq:quadratic-energy}
  \Psi^\star = \Psi^{\star e} (\bsigma)+\Psi_{1}^{\star\mu \sigma} (I_{\alpha\beta\gamma}, \tr \bsigma)+\Psi^{\star\mu \sigma}_{2} (I_{\alpha\beta\gamma},\tr \bsigma)+\Psi^{\star\mu} (\mm).
\end{equation}
The strain is given by the state law
\begin{equation*}
  \bepsilon = -\frac{\partial \Psi^\star }{\partial \bsigma},
\end{equation*}
derived, thanks to the chain rule, from
\begin{equation}\label{eq:dsigma}
  \frac{\partial f(I_{\alpha\beta\gamma}, \tr \bsigma) }{\partial \bsigma} = \frac{\partial f }{\partial I_{\alpha\beta\gamma}} \left(\bP^{d}:\frac{\partial I_{\alpha\beta\gamma}}{\partial \bsigma^{d}} + \bP^{\overline{d}}:\frac{\partial I_{\alpha\beta\gamma}}{\partial \bsigma^{d}}\right) + \frac{\partial f}{\partial \tr \bsigma} \id,
\end{equation}
using the fact, by \eqref{eq:PdPdbar0}--\eqref{eq:PdPdbar}, that the fourth-order cubic projectors are equal to
\begin{equation*}
  \frac{\partial \bsigma^{ d}}{\partial \bsigma} = \bP^{d},
  \qquad \frac{\partial \bsigma^{\overline{d}}}{\partial \bsigma} = \bP^{\overline{d}},
\end{equation*}
and finally that
\begin{equation*}
  \frac{\partial I_{\alpha\beta\gamma}}{\partial \tr \bsigma}=0,
  \quad
  \frac{\partial I_{\alpha\beta\gamma}}{\partial \bsigma^{d}}:\bP^{d}=\bP^{d}:\frac{\partial I_{\alpha\beta\gamma}}{\partial \bsigma^{d}}
  \quad
  \textrm{and}
  \quad
  \frac{\partial I_{\alpha\beta\gamma}}{\partial \bsigma^{d}}:\bP^{\overline{d}}=\bP^{\overline{d}}:\frac{\partial I_{\alpha\beta\gamma}}{\partial \bsigma^{d}}.
\end{equation*}
Magnetic field is given by the state law:
\begin{equation*}
  \mu_{0} \hh=-\frac{\partial \Psi^\star }{\partial \mm}=-\frac{\partial \Psi^\star }{\partial I_{\alpha\beta\gamma}}\frac{\partial I_{\alpha\beta\gamma}}{\partial \mm},
\end{equation*}
where $\mu_{0}$ is the vacuum permeability. The $\octa$-invariant $I_{\alpha\beta\gamma}$ are even in magnetization $\mm$. Moreover, their dependency in $\mm$ has been recast in \autoref{tab:even-invariants} as a dependency in
$\mm\otimes \mm$. This is also the case for $I_{200}=\norm{\mm}^2=\tr (\mm\otimes \mm)$ and $J_{111}^2$ preferably replaced by $I_{222}$ thanks to \eqref{J111square}.
Setting furthermore $(\mm\otimes \mm)^{d}=\bP^{d}:(\mm\otimes \mm)$ and $(\mm\otimes \mm)^{\overline{d}}=\bP^{\overline{d}}:(\mm\otimes \mm)$ we have then, for all the invariants $I_{\alpha\beta\gamma}$ of \autoref{tab:even-invariants},
\begin{align*}
  \frac{\partial I_{\alpha\beta\gamma}}{\partial \mm}
   & = \left(\frac{\partial I_{\alpha\beta\gamma}}{\partial (\mm\otimes \mm)^{d}}:\bP^{d}+\frac{\partial I_{\alpha\beta\gamma}}{\partial (\mm\otimes \mm)^{\overline{d}}}:\bP^{\overline{d}}\right):\frac{\partial (\mm\otimes \mm) }{\partial \mm}
  \\
   & = 2\left(\left(\frac{\partial I_{\alpha\beta\gamma}}{\partial (\mm\otimes \mm)^{d}}\right)^{\! \! d}+\left(\frac{\partial I_{\alpha\beta\gamma}}{\partial (\mm\otimes \mm)^{\overline{d}}}\right)^{\! \! \overline{d}\, }\right) \mm.
\end{align*}
and $\displaystyle \frac{\partial I_{200}}{\partial \mm}= 2 \mm$.

\subsection{Elastic free energy density}

Using~\autoref{tab:even-invariants}, the elastic free energy density writes:
\begin{equation}\label{eq:elastic-energy}
  \begin{aligned}
    \Psi^{\star e} (\bsigma) & = c_{020}\, I_{020}+c_{002}\, I_{002}+c_{010,010}\, (\tr \bsigma)^{2}                                                         \\
                             & = c_{020}\, \bsigma^{d}:\bsigma^{d} + c_{002}\, \bsigma^{\overline{d}}:\bsigma^{\overline{d}}+c_{010,010}\, (\tr\bsigma)^{2},
  \end{aligned}
\end{equation}
and the elastic strain as
\begin{equation}\label{eq:def:elas}
  \bepsilon^{e} = -\frac{\partial \Psi^{\star e} (\bsigma)}{\partial \bsigma}.
\end{equation}
This leads to the linear elastic constitutive relationship
\begin{equation*}
  \bepsilon^{e} = -2c_{020}\bsigma^{d} - 2c_{002}\bsigma^{\overline{d}}-2c_{010,010}\tr(\bsigma)\id ,
\end{equation*}
to be compared to the cubic linear Hooke's law~\cite{DM2011}
\begin{equation*}
  \bepsilon^{e} = \frac{1+\nu}{E} \bsigma^{d} + \frac{1}{2\mu} \bsigma^{\overline{d}} + \frac{1}{9\kappa}\tr\bsigma \, \id ,
\end{equation*}
where $E$ is the Young modulus, $\nu$ is the Poisson ratio, $\mu$ is the shear modulus and $\kappa=E/3(1-2\nu)$ is the compressibility modulus. We get thus
\begin{equation*}
  c_{020} = -(1+\nu)/2E, \qquad c_{002} = -1/4\mu, \qquad c_{010,010} = -1/18\kappa = -(1+\nu)/6E.
\end{equation*}

\begin{rem}
  An isotropic free energy density corresponds to the particular case where
  \begin{equation*}
    \mu=E/2(1+\nu), \quad \text{and} \quad \bsigma^{d}+\bsigma^{\overline{d}}=\bsigma'.
  \end{equation*}
  We get then
  \begin{equation*}
    \Psi^{\star e} = - \frac{1+\nu}{2E} \bsigma' : \bsigma' +\frac{1-2\nu}{6E}(\tr\bsigma)^{2}\, \id,
    \quad \text{and} \quad
    \bepsilon^{e}= \frac{1+\nu}{E} \bsigma' +\frac{1-2\nu}{3E}\tr\bsigma \, \id.
  \end{equation*}
\end{rem}

\subsection{Magnetic free energy density}

The purely magnetic part of \eqref{eq:quadratic-energy} which is the most general $\octa$-invariant polynomial of degree six in $\mm$, writes as
\begin{equation*}
  \Psi^{\star\mu} (\mm) = c_{200}I_{200} + c_{200,200}(I_{200})^{2} + c_{200,200,200}(I_{200})^{3}
  + c_{400}I_{400} + c_{200,400}I_{200}I_{400} + c_{600}I_{600},
\end{equation*}
where
\begin{equation*}
  I_{200}=\norm{\mm}^{2},
  \qquad
  I_{400}=\tr\big( (\mm \otimes \mm)^{\overline{d}\, 2} \big),
  \qquad
  I_{600}= \tr \big( (\mm \otimes \mm)^{\overline{d}\, 3} \big).
\end{equation*}
From the usual assumption $\norm{\mm}=m_{s}=\textit{constant}$ at the magnetic domain scale, some terms group together.
Introducing the direction cosines $\gamma_i$ of $\mm$ we further get:
\begin{equation*}
  I_{400} = 2 m_{s}^{4}\,(\gamma_{1}^{2}\gamma_{2}^{2}+\gamma_{1}^{2}\gamma_3^{2}+\gamma_{2}^{2}\gamma_3^{2}),
  \qquad
  I_{600} = 6 m_{s}^{6}\,\gamma_{1}^{2}\gamma_{2}^{2}\gamma_3^{2}.
\end{equation*}
Indeed, the standard cubic form~\cite{Boz1951,Cul1972}
\begin{equation*}
  \Psi^{\star\mu} (\mm)=K_{0}+K_{1}(\gamma_{1}^{2}\gamma_{2}^{2}+\gamma_{1}^{2}\gamma_3^{2}+\gamma_{2}^{2}\gamma_3^{2})+K_{2}(\gamma_{1}^{2}\gamma_{2}^{2}\gamma_3^{2})
\end{equation*}
is recovered by setting
\begin{align*}
  K_{0}= & c_{200} \, m_{s}^{2}+c_{200,200}\, m_{s}^{4}+c_{200,200,200}\,  m_{s}^{6},
  \\
  K_{1}= & 2m_{s}^{4} \left(c_{400}+c_{200,400}\, m_{s}^{2}\right),
  \\
  K_{2}= & 6m_{s}^{6} c_{600},
\end{align*}
where $K_{0}$, $K_{1}$ and $K_{2}$ are the so-called magneto-crystalline constants for cubic symmetry.

\begin{rem}
  An isotropic free energy density corresponds to the particular case where
  \begin{equation*}
    \Psi^{\star\mu} = a \norm{\mm}^{2} + b \norm{\mm}^{4} + c \norm{\mm}^{6} = a m_{s}^{2}+b m_{s}^{4}+c m_{s}^{6},
  \end{equation*}
  defining, here at degree six, no preferential direction.
\end{rem}

\subsection{First-order magneto-mechanical energy density term}

The so-called first-order magneto-mechanical term is linear in $\bsigma$ and quadratic in $\mm$. Its most general $\octa$-invariant polynomial expression is
\begin{equation*}
  \begin{aligned}
    \Psi^{\star\mu \sigma}_{1} (\mm,\bsigma) & = c_{210}I_{210}+c_{201}I_{201}+c_{200,010}I_{200}\tr \bsigma
    \\
                                             & = c_{210} (\mm \otimes \mm)^{d} : \bsigma^{d} +c_{201} (\mm \otimes \mm)^{\overline{d}} :\bsigma^{\overline{d}}  +c_{200,010}\norm{\mm}^{2}\tr\bsigma
  \end{aligned}
\end{equation*}
which can be recast as
\begin{equation*}
  \Psi^{\star\mu \sigma}_{1} (\mm,\bsigma)=\left(c_{210}(\mm \otimes \mm)^{d}+  c_{201} (\mm \otimes \mm)^{\overline{d}}+c_{200,010}\norm{\mm}^{2}\id\right):\bsigma .
\end{equation*}
The associated strain is
\begin{equation*}
  \bepsilon^{\mu}_{1}=-\frac{\partial \Psi^{\star\mu \sigma}_{1} (\mm,\bsigma)}{\partial \bsigma}=-c_{210}(\mm \otimes \mm)^{d}
  -c_{201}(\mm \otimes \mm)^{\overline{d}}-c_{200,010}\norm{\mm}^{2}\id .
\end{equation*}

This form is to be compared to the classical expression of the magnetostriction strain tensor reported in equation \eqref{eq:eps:mu} \cite{Cul1972,DTdLac1993}, illustrating the following correspondences between material constants:
\begin{equation*}
  \lambda_{100} = -\frac{2}{3}c_{210}m_{s}^{2} , \qquad
  \lambda_{111} = -\frac{2}{3}c_{201}m_{s}^{2} , \qquad
  \lambda_{v} = -3c_{200,010}m_{s}^{2}.
\end{equation*}

\begin{rem}
  Isotropy corresponds to the case $c_{201}=c_{210}$, $\lambda_{111}= \lambda_{100}$, \emph{i.e.} to
  \begin{equation*}
    \Psi^{\star\mu \sigma}_{1} (\mm,\bsigma) = -\frac{3 \lambda_{s}}{2 m_{s}^{2}} (\mm \otimes \mm)':\bsigma'-
    \frac{ \lambda_{v} }{3m_{s}^{2} } \norm{\mm}^{2}\, \tr\bsigma,
  \end{equation*}
  where $\lambda_s$ is the so-called isotropic magnetotrictive constant verifying $\lambda_s=\lambda_{100}=\lambda_{111}$.
\end{rem}

\subsection{Second-order magneto-mechanical free energy density term}

The so-called second-order magneto-mechanical term is quadratic both in $\bsigma$ and $\mm$. Its most general $\octa$-invariant polynomial expression is
\begin{equation}\label{eq:coupled-energy}
  \begin{aligned}
    \Psi^{\star\mu \sigma}_{2}(\mm,\bsigma) & = c_{220}I_{220}+c_{211}I_{211}+c_{202}^{a} I_{202}^{a} + c_{202}^{b}I_{202}^{b} + c_{210,010}I_{210}\tr \bsigma + c_{201,010}I_{201}\tr \bsigma
    \\
                                            & \quad + c_{200,020}I_{200}I_{020} + c_{200,002}I_{200}I_{002} + c_{200,010,010}I_{200}(\tr \bsigma)^{2},
  \end{aligned}
\end{equation}
which details as
\begin{align*}
  \Psi^{\star\mu \sigma}_{2} (\mm,\bsigma) & = c_{220} (\mm \otimes \mm)^{d}: \bsigma^{d\, 2} + c_{211} (\mm \otimes \mm)^{\overline{d}} : (\bsigma^{\overline{d}}\bsigma^{d}) + \left[c_{202}^{a} (\mm \otimes \mm)^{d} + c_{202}^{b} (\mm \otimes \mm)^{\overline{d}}\right]:\bsigma^{\overline{d}\, 2}
  \\
                                           & \quad + \tr \bsigma\, \left[c_{210,010} \, (\mm \otimes \mm)^{d} :\bsigma^{d}
  + c_{201,010} \, (\mm \otimes \mm)^{\overline{d}} :\bsigma^{\overline{d}} \right]                                                                                                                                                                                                                       \\
                                           & \quad + \norm{\mm}^{2}\left[ c_{200,020}\, \bsigma^{d}:\bsigma^{d}+c_{200,002} \, \bsigma^{\overline{d}}:\bsigma^{\overline{d}} + c_{200,010,010}\, (\tr \bsigma)^{2}\right].
\end{align*}
According to Mason~\cite{Mas1951}, the morphic effect involves six measurable material constants, \eqref{eq:coupled-energy} is in agreement with this assertion. Indeed, the last three terms in $\Psi^{\star\mu \sigma}_{2} (\mm,\bsigma)$,
\begin{equation*}
  m_{s}^{2} \left[c_{200,020}\,\bsigma^{d}:\bsigma^{d}+c_{200,002}\, \bsigma^{\overline{d}}:\bsigma^{\overline{d}}+ c_{200,010,010}\,
  (\tr \bsigma)^{2} \right]
\end{equation*}
do not depend on the orientation of the magnetization and thus cannot be distinguished from the purely elastic terms \eqref{eq:elastic-energy} in the total free energy density~\eqref{eq:quadratic-energy}. Therefore, the nine initial terms
are reduced to finally six measurable constants and we then set
\begin{equation*}
  c_{200,020} = c_{200,002} = c_{200,010,010}=0.
\end{equation*}
The associated strain
\begin{equation*}
  \bepsilon^{\mu}_{2} = - \frac{\partial \Psi^{\star\mu \sigma}_{2} (\mm,\bsigma)}{\partial \bsigma}
\end{equation*}
splits into three parts
\begin{equation*}
  \bepsilon^{\mu}_{2} = \bepsilon^{\mu\, d}_{2} + \bepsilon^{\mu\, \overline{d}}_{2} + \bepsilon^{\mu}_{2v},
\end{equation*}
where, using~\eqref{eq:PdPdbar0},
\begin{equation*}
  \bepsilon^{\mu\, d}_{2} = \bP^{d}:\bepsilon^{\mu}_{2},
  \qquad
  \bepsilon^{\mu\, \overline{d}}_{2} = \bP^{\overline{d}}:\bepsilon^{\mu}_{2},
  \qquad
  \bepsilon^{\mu}_{2v} = \frac{1}{3} \left(\tr \bepsilon^{\mu}_{2}\right) \id.
\end{equation*}
We get
\begin{align*}
  \bepsilon^{\mu \, d}_{2}            & = - 2c_{220}\,(\mm \otimes \mm)^{d} \bsigma^{d} - \frac{1}{2} c_{211}\,\left((\mm \otimes \mm)^{\overline{d}} \bsigma^{\overline{d}} + \bsigma^{\overline{d}}(\mm \otimes \mm)^{\overline{d}} \right)^{d} - c_{210,010}\, (\tr \bsigma)\, (\mm \otimes \mm)^{d},
  \\
  \bepsilon^{\mu \, \overline{d}}_{2} & = -\frac{1}{2} c_{211}\left((\mm \otimes \mm)^{\overline{d}} \bsigma^{d} + \bsigma^{d}(\mm \otimes \mm)^{\overline{d}} \right)^{\overline{d}} - c_{201,010} \, (\tr\bsigma)\, (\mm \otimes \mm)^{\overline{d}}
  \\
                                      & \quad - \left[\left(c_{202}^{a} (\mm \otimes \mm)^{d} + c_{202}^{b} (\mm \otimes \mm)^{\overline{d}}\right)\bsigma^{\overline{d}}
    + \bsigma^{\overline{d}}\left(c_{202}^{a}(\mm \otimes \mm)^{d} + c_{202}^{b} (\mm \otimes \mm)^{\overline{d}}\right)
    \right]^{\overline{d}},
  \\
  \bepsilon^{\mu}_{2v}                & = - \left(c_{210,010}\,(\mm \otimes \mm)^{d} :\bsigma^{d} + c_{201,010} \, (\mm \otimes \mm)^{\overline{d}} :\bsigma^{\overline{d}}\right) \id.
\end{align*}
In the canonical cubic basis $(\ee_{i})$, they correspond respectively to the deviatoric diagonal part, the out-of-diagonal part and the volumetric (hydrostatic) part of $\bepsilon^{\mu}_{2}$.

\begin{rem}
  Using example~\ref{ex:R3-plus-S2-SO3}, the most general expression for an isotropic quadratic free energy density writes
  \begin{equation*}
    \Psi^{\star\mu\sigma}_{2} = A\,(\mm \otimes \mm)': \bsigma^{\prime \, 2} + B(\tr \bsigma)(\mm \otimes \mm)': \bsigma' + \norm{\mm}^{2}\left(C\tr(\bsigma^{\prime\, 2}) + D \,(\tr \bsigma)^{2}\right).
  \end{equation*}
  At the magnetic domain scale, since we have
  \begin{equation*}
    \mm \cdot (\bsigma^{\prime \, n }\mm) = (\mm \otimes \mm): \bsigma^{\prime  \,n}, \quad \text{and} \quad (\tr(\mm \otimes \mm)\, \id): \bsigma^{\prime  n} = \norm{\mm}^{2} \tr(\bsigma^{\prime  n}),
  \end{equation*}
  for $n=1,2$, where $\bsigma^{\prime \, n} = (\bsigma^\prime)^ {n}$, we get
  \begin{equation*}
    \Psi^{\star\mu\sigma}_{2} = A\, (\mm \otimes \mm)': \bsigma^{\prime \, 2} + B\,(\tr \bsigma)(\mm \otimes \mm)': \bsigma' + m_{s}^{2} \left(C \tr( \bsigma^{\prime\, 2})+ D \,(\tr \bsigma)^{2}\right).
  \end{equation*}
  There are indeed four material constants, in accordance with Kraus~\cite{Kra1988}, but the last two terms associated with material constants $C$ and $D$ do not depend on the orientation of the magnetization and cannot be distinguished from the purely elastic terms. Therefore, we set
  \begin{equation*}
    C = D = 0.
  \end{equation*}
  Since
  \begin{equation*}
    \bsigma' = \bsigma^{d} + \bsigma^{\overline{d}} = \bP^{d}:\bsigma+\bP^{\overline{d}}:\bsigma, \quad \text{and} \quad \bsigma^{d}:\bsigma^{\overline{d}} = 0,
  \end{equation*}
  an isotropic free energy density $\Psi^{\star\mu\sigma}_{2}$ expands as
  \begin{multline*}
    \Psi^{\star\mu\sigma}_{2} = A \left((\mm \otimes \mm)^{d} + (\mm \otimes \mm)^{\overline{d}}\right) : (\bsigma^{d\,2}+2\bsigma^{d}\bsigma^{\overline{d}} + \bsigma^{\overline{d}\, 2})\, + \, B (\tr \bsigma)\left((\mm \otimes \mm)^{d}: \bsigma^{d} + (\mm \otimes \mm)^{\overline{d}}: \bsigma^{\overline{d}} \right).
  \end{multline*}
  By comparison with \eqref{eq:coupled-energy} and using the fact that $(\mm \otimes \mm)^{\overline{d}}: \bsigma^{d\, n}=0$ and
  \begin{equation*}
    \left. (\mm \otimes \mm)^{d}: (\bsigma^{d}\bsigma^{\overline{d}})\right. = \tr[(\mm \otimes \mm)^{d}\bsigma^{d}\bsigma^{\overline{d}}] = \left. ((\mm \otimes \mm)^{d}\bsigma^{d}):\bsigma^{\overline{d}}\right. = 0,
  \end{equation*}
  one sees that isotropy corresponds to
  \begin{equation*}
    A = c_{220} = c_{202}^{a} = c_{202}^{b} = \frac{1}{2} c_{211}, \quad \text{and} \quad B = c_{210,001} = c_{210,010}.
  \end{equation*}
  Isotropic second-order magnetostriction strain tensor expresses then as
  \begin{equation*}
    \bepsilon_{2}^{\mu} = - A \, ( (\mm \otimes \mm)' \bsigma^{\prime}+\bsigma^{\prime}(\mm \otimes \mm)')'-B\,\left((\tr \bsigma)(\mm \otimes \mm)'+ ((\mm \otimes \mm)': \bsigma')\, \id\right)
  \end{equation*}
  where $A$ and $B$ can be expressed as functions of the Kraus constants~\cite{Kra1988} $\lambda^{\prime}_s$ and $\lambda^{\prime\prime}_s$
  \begin{equation*}
    A = \frac{3}{2} \frac{\lambda^{\prime\prime}_s}{m_s^2} , \qquad
    B = -\frac{1}{2} \frac{3\lambda^\prime_s+\lambda^{\prime\prime}_s}{m_s^2}.
  \end{equation*}
\end{rem}

\subsection{Higher-order free energy densities}

In the above subsections, we have recovered known free energy densities using several bi-homogeneous polynomial expansions and expressed them in an intrinsic form. The knowledge of a minimal integrity basis of 29 invariants given in~\autoref{tab:even-invariants} together with $I_{222}$ (by theorem \ref{thm:main-octa} and remark~\ref{rem:I222}) allows us to easily generate the most general expression of a bi-homogeneous invariant polynomial of an arbitrary bi-degree $(\alpha, \beta)$ in $(\mm, \bsigma)$. The methodology used avoids the tedious calculations associated with the enforcement of the cubic symmetry for constitutive tensors of order six or more (see for instance \cite{Smi1994}). Any higher order, both in magnetization and in stress, are now reachable for cubic symmetry and the number of invariants involved at each degree in magnetization and stress is given in~\autoref{tab:coeffs} (which includes the invariants function of $I_{200}=\norm{\mm}^{2}$).

Note however, that the numbers of bi-homogeneous invariants involved in the free energy density expressed at the magnetic domain scale are not necessarily equal to the numbers of associated material parameters. Indeed, the further relationship $\norm{\mm}^{2}=m_{s}^{2}$ (a material parameter then) due to constant norm of the magnetization $\mm$ has to be considered. We propose to illustrate this subtlety in the case of an energy density of degree 2 in magnetization and 3 in stress (bi-degree $(2,3)$). According to \autoref{tab:coeffs}, 20 bi-homogeneous invariants span the vector space $\RR[V^{\prime}]^{\octa}_{2,3}$:
\begin{itemize}
  \item $I_{203}$, $I_{002}I_{201}$, $I_{003}I_{200}$
  \item $I_{212}^{a}$, $I_{212}^{b}$, $I_{002}I_{210}$, $I_{012}I_{200}$, $I_{202}^{a}\tr \bsigma$, $I_{202}^{b}\tr \bsigma$, $I_{200}I_{002}\tr \bsigma$
  \item $I_{221}$, $I_{020}I_{201}$, $I_{201}(\tr \bsigma)^{2}$, $I_{211}\tr \bsigma$
  \item $I_{020}I_{210}$, $I_{030}I_{200}$, $I_{200}(\tr \bsigma)^{3}$, $I_{210}(\tr \bsigma)^{2}$, $I_{220}\tr \bsigma$, $I_{020}I_{200}\tr \bsigma$
\end{itemize}
Among these invariants, 6 depends on $I_{200}= \norm{\mm}^{2}=m_{s}^{2}=\textit{constant}$. Being related to the non-linear
elasticity terms $I_{003}$, $I_{012}$, $I_{002}\tr \bsigma$, $I_{030}$, $I_{020}\tr \bsigma$, which are magnetization independent, they cannot be distinguished from these initial elasticity terms. Therefore, 14 coefficients instead of 20 are necessary to describe cubic magneto-elastic phenomena at the domain scale, of degree 2 in magnetization and 3 in stress.

This illustrating case can be generalized to any order in a straightforward manner as only the invariant $I_{200}$ is a constant. Consequently, for a degree $\alpha$ in magnetization, the invariants of a bi-degree $(\alpha, \beta)$ are all those of degree $\alpha$ in magnetization not depending on $I_{200}$ plus all those of degree $\alpha-2$ in $\mm$ multiplied by $I_{200}$. Thus, it is enough to subtract two consecutive lines from~\autoref{tab:coeffs} to obtain the number of material coefficients at each bi-degree. These results are summarized in \autoref{tab:coeffs:mat}.

\begin{table}[ht]
  \centering
  \begin{tabular}{c|ccccccccccc}
    \toprule
    \diagbox{$\mm$}{$\bsigma$} & 0  & 1 & 2  & 3  & 4   & 5   & 6   & 7   & 8    & 9    & 10
    \\
    \midrule
    0                          & -- & 1 & 3  & 6  & 11  & 18  & 32  & 48  & 75   & 111  & 160
    \\
    2                          & 0  & 2 & 6  & 14 & 31  & 60  & 106 & 180 & 288  & 442  & 659
    \\
    4                          & 1  & 3 & 10 & 24 & 53  & 102 & 185 & 312 & 504  & 777  & 1161
    \\
    6                          & 1  & 4 & 13 & 34 & 73  & 144 & 262 & 444 & 717  & 1112 & 1660
    \\
    8                          & 1  & 5 & 17 & 42 & 95  & 186 & 378 & 576 & 933  & 1443 & 2162
    \\
    10                         & 1  & 6 & 20 & 52 & 115 & 228 & 375 & 708 & 1146 & 1748 & 2661
    \\
    \bottomrule
  \end{tabular}
  \caption{Number of material parameters associated with different bi-degrees in $(\mm, \bsigma)$.}
  \label{tab:coeffs:mat}
\end{table}

\section{Conclusion}
\label{sec:conclusion}

This paper takes as a study support the magnetization and stress couple $(\mm, \bsigma)$ in order to build a free energy density, expressed as a polynomial function of $\mm$ and $\bsigma$, suitable for magnetostrictive materials with cubic microstructures.
An application of so-called second-order\footnote{in fact of degree 2 in stress and magnetization.} magneto-elasticity was shown to account for the non-monotonic sensitivity to stress of both the magnetic susceptibility and the magnetostriction of some soft magnetic materials~\cite{Hub2019}. It introduces a sixth-order constitutive tensor, the morphic tensor. The extension to higher bi-degrees in $(\mm, \bsigma)$ involves then constitutive tensors of high order, and using tensors of order higher than 6 (with cubic symmetry) becomes rather tedious using such techniques.

A polynomial formulation has thus been preferred rather than constitutive tensor formulations in order to express the free energy density for any bi-degree in magnetization and stress. For this purpose, new minimal integrity bases
$\set{\tr \bsigma, I_{\alpha \beta \gamma}, J_{\alpha \beta \gamma}}$ for the orientation preserving octahedral group ($\octa^+$) and $\set{\tr \bsigma, I_{\alpha \beta \gamma}}$ for the full octahedral group ($\octa$) have been computed. The $\octa^+$--integrity basis is constituted of 60 invariants (\emph{i.e.} one less than in the initial Smith-Smith-Rivlin integrity basis) when the $\octa$--integrity basis is constituted of 30 invariants (all even in $\mm$). Furthermore, we have proved that both the proposed $\octa^+$-- and $\octa$--integrity bases are minimal.

The novelty is that these new integrity bases are expressed in a simple intrinsic way. Contrary to Smith-Smith-Rivlin invariants, the proposed cubic invariants are not expressed in a particular coordinate system (and thus, nor their partial derivatives with respect to $\bsigma$ and to $\mm$, as provided in a systematic manner in~\autoref{sec:domain-scale-coupling}). Theses new (coordinate free) invariants allow general expressions of the free energy density $\Psi^\star(\bsigma, \mm)$ and the magneto-elastic coupling at the magnetic domains scale, for material exhibiting cubic symmetry.

This work leads moreover to the following perspectives concerning magneto-elasticity:
\begin{enumerate}
  \item \textit{the introduction in the free energy density of terms of higher degree in stress and/or functions of invariants}: this introduction in a multiscale model may help to model the saturation of the magnetoelastic phenomena (magnetostriction especially) still not reachable yet \cite{Hub2019};

  \item \textit{the extension to macroscopic constitutive laws}: this extension is possible by considering the magnetization $\mm$ as a non-constant norm quantity. Of course the number of invariants increases compared to the case $\norm{\mm}=m_s=\textit{constant}$ and cubic symmetry might be not relevant (the texturized materials are usually orthotropic). The cubic symmetry remains however relevant for highly texturized materials and/or materials that can be described by their texture components \cite{DANIEL2020}. It is worth pointing out that any cubic energy density is also a function (not necessarily polynomial) $\Psi^\star(\tr \bsigma, I_{\alpha\beta\gamma})$, because an integrity basis is also a functional basis;

  \item \textit{the identification of material constants via dedicated experiments:} this is another relevant perspective. We may include in this strategy some complementary hypothesis like incompressibility. Experimental results (magnetostriction and magnetic behavior under stress) can on the other hand help to define the relevant invariants.
\end{enumerate}

Last, the methodology developed is not limited to magneto-elasticity and may be used for the modeling of any other coupled phenomena
involving, in the cubic symmetry case, a second order tensor and a (pseudo-)vector.

\appendix

\section{Proofs}
\label{sec:proofs}

In this section, we provide proofs of theorems~\ref{thm:main-octa-plus} and~\ref{thm:main-octa}.

\begin{proof}[Proof of theorem~\ref{thm:main-octa-plus}]
  Observe first that, if $\set{J_{1}, \dotsc , J_{r}}$ is a minimal integrity basis of the invariant algebra
  \begin{equation*}
    \RR[\RR^{3} \oplus \HH^{2}(\RR^{3})]^{\octa^{+}},
  \end{equation*}
  then, $\set{\tr \bsigma, J_{1}, \dotsc , J_{r}}$ is a minimal integrity basis of
  \begin{equation*}
    \RR[\RR^{3} \oplus \Sym^{2}(\RR^{3})]^{\octa^{+}}.
  \end{equation*}
  Now, verifying that the set $\mathcal{F}$ of invariants obtained by those in~\autoref{tab:even-invariants} (but omitting the first invariant $\tr \bsigma$) and those of~\autoref{tab:odd-invariants} is a minimal generating set for $\RR[\RR^{3} \oplus \HH^{2}(\RR^{3})]^{\octa^{+}}$ has been achieved using the algorithm described in \autoref{sec:algo} and the Computer Algebra System \emph{Macaulay2}~\cite{Macau2}, a software specialized in algebraic geometry and effective in polynomial computations. The crucial degree bound $N = 12$, used to terminate the algorithm was obtained \emph{a priori} by theorem~\ref{thm:octa-plus-bound} in~\autoref{sec:degree-bounds}.
\end{proof}

\begin{rem}\label{rem:proof}
  In the proof, the algorithm of \autoref{sec:algo} has been applied to the family $\mathcal{F}$ of the 60 invariants in~\autoref{tab:even-invariants} and~\autoref{tab:odd-invariants}. Since the output was $\mathcal{F}$ itself, we conclude that $\mathcal{F}$ is minimal basis of $\RR[\RR^{3} \oplus \HH^{2}(\RR^{3})]^{\octa^{+}}$. But, when applied to the family of the 61 invariants in Smith--Smith--Rivlin paper~\cite[Section 6]{SSR1963}, the algorithm lead to the discovery of the superfluous invariant $I_{25}$.
\end{rem}

We will now proceed with the proof of theorem~\ref{thm:main-octa}. For this goal, we will introduce the \emph{Reynolds projector}, a very efficient tool in invariant theory. This operator $R_{G}$ is defined for any finite group $G$ and any representation $V$ of $G$.

\begin{defn}
  The Reynolds projector is a linear projector from $\RR[V]$ onto the invariant algebra $\RR[V]^{G}$, defined as
  \begin{equation}\label{eq:Reynolds-projector}
    R_{G}(p) := \frac{1}{\abs{G}} \sum_{g \in G} g \star p, \qquad p \in \RR[V],
  \end{equation}
  where $\abs{G}$ is the order of the group $G$ and $g\star p:=p(\rho(g)^{-1}\vv)$ for $\vv\in V$.
\end{defn}

\begin{proof}[Proof of theorem \ref{thm:main-octa}]
  The full cubic group $\octa$ can be decomposed as the disjoint union
  \begin{equation}\label{eq:octa-decomposition}
    \octa = \octa^{+} \sqcup I_{c}\octa^{+},
  \end{equation}
  where $\octa^{+}$ is the subgroup of positive symmetries of the cube and where $I_{c}$ is the central symmetry.
  We get thus in particular, by~\eqref{eq:Reynolds-projector}--\eqref{eq:octa-decomposition},
  \begin{equation*}
    R_{\octa}(p) = \frac{1}{2}\left(R_{\octa^{+}}(p) + I_{c}\star R_{\octa^{+}}(p)\right).
  \end{equation*}
  Consider now a polynomial $p$ invariant under $\octa$. We have then $R_{\octa}(p)=p$ and since $p$ is obviously invariant under $\octa^{+}$, we get thus
  \begin{equation}\label{eq:Reynnolds-decomposition}
    p = R_{\octa}(p) = \frac{1}{2}\left(p + I_{c}\star p\right).
  \end{equation}
  Now $p$ is invariant by $\octa^{+}$, we can thus write $p$ (using theorem~\ref{thm:main-octa-plus}) as a polynomial function of the 29 invariants $I_{\alpha\beta\gamma}$ from~\autoref{tab:even-invariants} and the 31 invariants $J_{\alpha'\beta'\gamma'}$ from~\autoref{tab:odd-invariants} and will write
  \begin{equation*}
    p = P(I_{\alpha\beta\gamma}, J_{\alpha'\beta'\gamma'}).
  \end{equation*}
  where, we have moreover
  \begin{equation*}
    I_{c} \star I_{\alpha\beta\gamma} = I_{\alpha\beta\gamma}, \qquad I_{c} \star J_{\alpha'\beta'\gamma'} = -J_{\alpha'\beta'\gamma'}.
  \end{equation*}
  We get thus
  \begin{equation*}
    I_{c} \star p = P(I_{\alpha\beta\gamma}, -J_{\alpha'\beta'\gamma'}).
  \end{equation*}
  Now, using~\eqref{eq:Reynnolds-decomposition}, we get
  \begin{equation*}
    p = \frac{1}{2} \left(P(I_{\alpha\beta\gamma},J_{\alpha'\beta'\gamma'}) + P(I_{\alpha\beta\gamma},-J_{\alpha'\beta'\gamma'})\right).
  \end{equation*}
  Therefore, in the expansion of $p$, only monomials in $I_{\alpha\beta\gamma}$ and $J_{\alpha'\beta'\gamma'}$ with an even number of $J_{\alpha'\beta'\gamma'}$ remain. We conclude that $\RR[V]^\octa$ is generated by the $I_{\alpha\beta\gamma}$ and the products $J_{\alpha\beta\gamma} J_{\alpha''\beta''\gamma''}$
  which are invariant under $\octa$. This family is thus a generating set but it may not be \emph{minimal}. However, the fact that it is a generating set provides us with the following precious information: the maximum degree of the generators of a minimal basis is less than
  \begin{equation*}
    \max \set{ \max \deg(I_{\alpha\beta\gamma}), \max \deg(J_{\alpha'\beta'\gamma'} J_{\alpha''\beta''\gamma''})} \le 18,
  \end{equation*}
  because the maximal degree is obtained for the square of $J_{900}$ (the last invariant in~\autoref{tab:odd-invariants}). The family provided in the theorem has been checked to be a minimal integrity basis using the algorithm described in this section with degree bound $N=18$, which is known \emph{a priori} by the argument above.
\end{proof}

\section{An algorithm to extract a minimal integrity basis}
\label{sec:algo}

In this section, we formulate an algorithm which, starting from a finite set $\mathcal{F}=\set{I_{1}, \dotsc , I_{p}}$ of homogeneous invariants, checks if this set generates the invariant algebra, and, if the answer is positive, extracts from it a minimal integrity basis $\mathcal{MB} = \set{I_{i_{1}}, \dotsc , I_{i_{r}}}$. It is derived from an algorithm introduced in~\cite{DOADK2020} to clean up a generating set $\mathcal{F}$ from its redundant elements and thus produce a minimal integrity basis $\mathcal{MB}$. Since in the present case, $\mathcal{F}$ is not assumed to be a generating set, the procedure in~\cite{DOADK2020} has to be modified.

In these algorithms, calculations are executed in finite dimensional vector spaces of polynomials of a given degree and a degree bound $N$ must be furnished in order to terminate the computations. If we know that the family $\mathcal{F}$ is generating, then this bound $N$ can be defined as the maximum degree of the polynomials in the family $\mathcal{F}$ (as done in \cite{OKA2017,DOADK2020}). However, in the present case, we do not know \textit{a priori} that the family $\mathcal{F}$ is a generating set and more mathematics are required. Since we deal with a finite group, a degree bound on the generators of the invariant algebra is already known. It is given by Noether's theorem (see~\autoref{sec:degree-bounds}) and equals the order of the group. In our case, this order is $48$ for $\octa$ and $24$ for $\octa^{+}$, much higher than our computation means, since \emph{the computation time is exponential in the total degree $n$} as illustrated in~\autoref{fig:temps}. We had therefore to use more sophisticated tools of group theory in order to reduce this bound. This has been done in~\autoref{sec:degree-bounds}, where we have reduced this \textit{a priori} bound to $N=12$ for $\octa^{+}$ and $N=18$ for the full octahedral group $\octa$. These lower bounds have allowed to decrease drastically the computation time, approximately from one month to one second !

\begin{figure}[ht]
  \centering
  \includegraphics[scale=0.8]{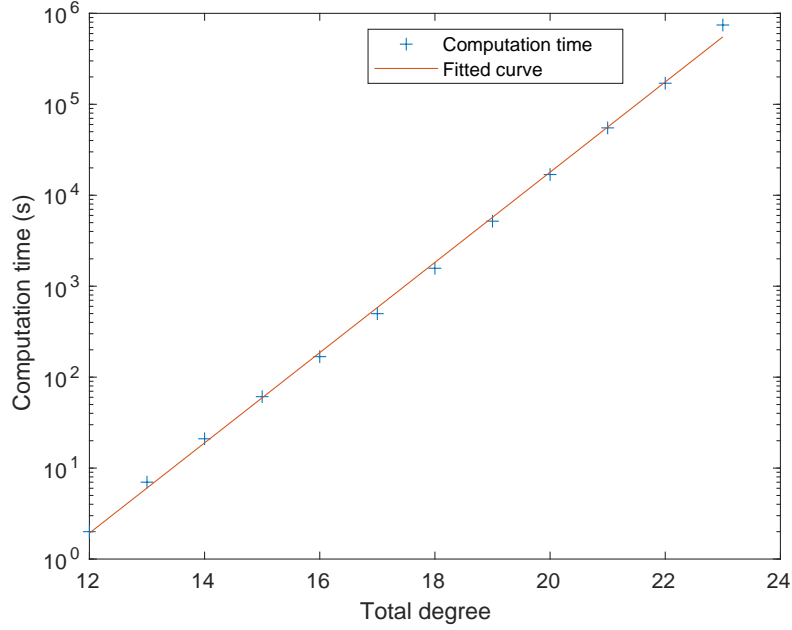}
  \caption{Computation time versus total degree $n$ on an eight-core machine with a CPU frequency of 1197 MHz and 32 GB of RAM.}
  \label{fig:temps}
\end{figure}

In the following, we set $V'=\RR^{3} \oplus \HH^{2}(\RR^{3})$ and $G$ stands either for $\octa$ or $\octa^{+}$. We recall that $V'$ splits in three irreducible representations of $G$, which have been described by the three independent variables $\mm$, $\bsigma^{d}$ and $\bsigma^{\overline{d}}$. Moreover, the invariant algebra $\RR[V']^{G}$ can be split into the direct sum of vector spaces
\begin{equation*}
  \RR[V']^{G} = \bigoplus_{n=0}^{\infty} \RR[V']^{G}_{n},
\end{equation*}
of homogeneous polynomials in $\mm$, $\bsigma^{d}$ and $\bsigma^{\overline{d}}$ which are of total degree $n$. We need thus to check that each finite dimensional vector space $\RR[V']^{G}_{n}$ is spanned by the products $I_{1}^{\mathsf{a}_{1}} \dotsm I_{p}^{\mathsf{a}_{p}}$ of total degree $n$. This procedure seems, at first, never-ending, because we need to check this for all $n \in \NN$. However, if we can show \textit{a priori} that degrees of the homogeneous polynomials in a minimal integrity basis are smaller than some bound $N$ (see~\autoref{sec:degree-bounds} for explicit estimations of such a bound), then, one needs only to check this for $n \le N$. Indeed, then, all homogeneous polynomials of degree $n> N$ are reducible.

The inputs/output of the algorithm are

\begin{itemize}
  \item \textbf{Inputs} :
        \begin{itemize}
          \item[(a)] A finite set $\mathcal{F}$ of homogeneous polynomial invariants in $\RR[V']^{G}$.
          \item[(b)] The coefficients $a_{n}$ (for $n \le N$) of the Hilbert series
                \begin{equation*}
                  H_{\rho}(z) = \sum a_{n} \, {z}^{n}, \quad \text{where} \quad a_{n} = \dim \left(\RR[V']^{G}_{n}\right).
                \end{equation*}
        \end{itemize}
  \item \textbf{Output} : A minimal integrity basis $\mathcal{MB}$ extracted from $\mathcal{F}$, if $\mathcal{F}$ is a generating set, and otherwise an error message.
\end{itemize}

\begin{rem}
  For the proper octahedral group $G=\octa^{+}$ and $V'=\RR^{3} \oplus \HH^{2}(\RR^{3})$, the Hilbert series writes
  \begin{align*}
    H_{(V',\octa^+)}(z) & = \frac{1}{24}\left( \frac{1}{(1-z)^8} + \frac{6}{(1-z^2)^4} + \frac{3}{(1-z^2)^4} + \frac{8(1-z)}{(1-z^3)^3} + \frac{6}{(1-z^4)^2}\right)
    \\
                        & = 1+3\,{z}^{2} + 6\,{z}^{3} + 17\,{z}^{4} + 33\,{z}^{5} + 81\,{z}^{6} + 141\,{z}^{7}+282\,{z}^{8}
    \\
                        & \quad + 480\,{z}^{9} + 828\,{z}^{10} + 1326\,{z}^{11} + 2137\,{z}^{12}+O \left( {z}^{13} \right).
  \end{align*}
  For the full octahedral group $G=\octa$ and $V'=\RR^{3} \oplus \HH^{2}(\RR^{3})$, it writes
  \begin{align*}
    H_{(V',\octa)}(z) & = \frac{1}{24}\left( \frac{1+3z^2}{(1+z)^3(1-z)^8} + \frac{6}{(1-z)(1-z^2)^4} + \frac{3}{(1-z)(1-z^2)^4} \right.
    \\
                     & \quad + \left. \frac{8(1-z)}{(1+z^3)(1-z^3)^3} + \frac{6}{(1+z)(1-z^4)^2}\right)
    \\
                     & = 1+3\,{z}^{2}+5\,{z}^{3} + 13\,{z}^{4}+22\,{z}^{5} + 52\,{z}^{6} + 84\,{z}^{7} + 164\,{z}^{8}
    \\
                     & \quad + 268\,{z}^{9} + 456\,{z}^{10} + 714\,{z}^{11} + 1141\,{z}^{12} + 1697\,{z}^{13} + 2560\,{z}^{14}
    \\
                     & \quad + 3692\,{z}^{15} + 5310\,{z}^{16} + 7413\,{z}^{17} + 10317\,{z}^{18} + O\left({z}^{19}\right).
  \end{align*}
\end{rem}

Let $n_{0}$ be the smallest positive integer $n$ for which the coefficient $a_{n}$ of the Hilbert series $H_{\rho}$ does not vanish ($a_{0}$ is always equal to $1$), and for each $n \ge 1$, define $\mathcal{F}_{n}$ as the subset of elements of $\mathcal{F}$ of degree $n$. The algorithm consists in three main steps, and is summarized below.

\subsection*{The algorithm}

\begin{description}
  \item[Initialization $n = n_{0}$] Extract a subfamily $\mathcal{B}_{n_{0}}$ from $\mathcal{F}_{n_{0}}$ of maximal rank (and linearly independent) in the vector space $\RR[V']^{G}_{n_{0}}$. If $\rank(\mathcal{B}_{n_{0}}) = a_{n_{0}}$, increment $n$ by one, otherwise, return an error message.

  \item[Iteration step $n$ ($n_{0} < n \le N$)]
        Suppose that we have obtained, at step $n-1$, the family $\mathcal{B}_{n-1}$. Note that $\mathcal{B}_{n-1}$ may contain homogeneous polynomials of different degrees but all of them are lower than $n$.
        \begin{itemize}
          \item Generate the family $\mathcal{R}_{n}$ of all \emph{reducible homogeneous invariants} of degree $n$ which can be written as products of polynomials in $\mathcal{B}_{n-1}$ (this can be done by solving a Diophantine equation~\cite[Section 6]{DOADK2020}).
          \item If $\rank(\mathcal{R}_{n}) = a_{n}$, then set $\mathcal{B}_{n} := \mathcal{B}_{n-1}$ and increment $n$ by one.
          \item Otherwise, extract from $\mathcal{F}_{n}$ a minimal subset $\mathcal{I}_{n}$ such that $\rank(\mathcal{R}_{n}\cup \mathcal{I}_{n}) = a_{n}$. This can be done by ordering the elements of $\mathcal{F}_{n}$ (if $\mathcal{F}_{n}$ is not empty), adding iteratively an element of $\mathcal{F}_{n}$ to $\mathcal{B}_{n-1}$ and checking the rank of the new family.
          \item If, at the end, the new set $\mathcal{B}_{n}$ satisfies $\rank(\mathcal{B}_{n}) = a_{n}$, increment $n$ by one, otherwise, return an error message.
        \end{itemize}

  \item [Termination $n=N$] If all the steps of the algorithm have matched, then, one can conclude that the output $\mathcal{MB} := \mathcal{B}_{N}$ is a minimal integrity basis of $\RR[V']^{G}$.
\end{description}

\begin{rem}
  In practice, it may be advantageous to refine the algorithm in the following way, in order to optimize the computation time. Rather than using the mono-graduation given by the total degree, we can use the tri-graduation defined by the decomposition
  \begin{equation*}
    \RR[V']^{G} = \bigoplus_{\alpha, \beta, \gamma} \RR[V']^{G}_{\alpha \beta \gamma},
  \end{equation*}
  of multi-homogeneous polynomials of respective degree $\alpha$, $\beta$, $\gamma$ in $\mm$, $\bsigma^{d}$ and $\bsigma^{\overline{d}}$. We need then to define a total order on multi-indices $(\alpha,\beta,\gamma)$. The appropriate choice corresponds to the \emph{graded lexicographic order}, which is denoted by $\preceq$ and defined as follows:
  \begin{equation*}
    (\alpha_{1},\beta_{1},\gamma_{1}) \preceq (\alpha_{2},\beta_{2},\gamma_{2}),
  \end{equation*}
  if the total degree $\alpha_{1}+\beta_{1}+\gamma_{1}$ of $(\alpha_{1},\beta_{1},\gamma_{1})$ is lower than the total degree $\alpha_{2}+\beta_{2}+\gamma_{2}$ of $(\alpha_{2},\beta_{2},\gamma_{2})$, or if they have the same total degree and the first non-vanishing difference $\alpha_{1}-\alpha_{2}$, $\beta_{1}-\beta_{2}$, $\gamma_{1}-\gamma_{2}$ is negative. For instance, we have
  \begin{equation*}
    (0,0,1) \preceq (0,1,0) \preceq (1,0,0) \preceq (0,0,2) \preceq (0,1,1) \preceq \dotsb
  \end{equation*}
  Then, the corresponding algorithm is the same with the only difference that the iteration must be done using multi-degrees $(\alpha, \beta, \gamma)$ and the graded lexicographic order, rather than the total degree $n = \alpha + \beta + \gamma$ and its natural order. In that case, one must use the tri-graded Hilbert series $H_{\rho}(z_{m},z_{d},z_{\bar d})$ (see~\eqref{eq:Hilbert-octa-plus-tri-graded} and~\eqref{eq:Hilbert-octa-tri-graded}).
\end{rem}

\section{Degree bounds}
\label{sec:degree-bounds}

The termination of the algorithm provided in~\autoref{sec:algo} is crucially dependent on the existence of a bound on the total degree of the generators of a minimal integrity basis which \emph{must be known a priori}. Group theory is required to estimate such a bound. In this appendix, we will state a few fundamental results on such bounds for the problem we consider \textit{i.e.} the action of the cubic groups $\octa$ and $\octa^{+}$ on some vector space $V$.

Let $G$ be a finite group \emph{acting linearly} on a vector space $V$. We call such an action a \emph{representation} of $G$ on $V$. It is known since 1916, thanks to a theorem of Emmy Noether~\cite{Noether1916}~\cite[Theorem 2.1.4]{Stu1993}, that the invariant algebra $\RR[V]^{G}$ can always be generated by homogeneous polynomials with total degree lower than $\abs{G}$, the order of the group. In particular, the algebra $\RR[V]^{G}$ is finitely generated and the degrees of the elements of a minimal integrity basis are lower than $\abs{G}$. However, this result is far from optimal and there are many situations where this bound can be lowered~\cite{Sch1991a,NS2002}.

As already stated, a minimal integrity basis of $\RR[V]^{G}$ is not unique but the degrees of the generators of a minimal integrity basis and their number are well defined and independent of the choice of a particular basis (like the number of elements in a basis of a vector space is independent of a particular basis and defines the dimension of the space). We will denote by $\beta(G,V)$ this maximum degree. The problem is that there is no general algorithm to compute this number $\beta(G,V)$, but to compute a minimal integrity basis of $\RR[V]^{G}$. Let us now define
\begin{equation*}
  \beta(G) = \sup_{V} \beta(G,V),
\end{equation*}
where $V$ runs over all finite dimensional representations of $G$. Of course, by Noether's theorem~\cite{Noether1916} we get
\begin{equation}
  \label{eq:noether}
  \beta(G) \le \abs{G},
\end{equation}
but there are finite groups for which a better bound can be explicitly computed. This is the case, for instance, for the dihedral group $D_{n}$ of index $n$. This group can be realized as the subgroup of isometries of the plane generated by the rotation by angle $2\pi/n$ and by the reflection with respect to the $x$-axis. It is of order $2n$ and Noether's bound leads to
\begin{equation*}
  \beta(D_{n})\le 2n .
\end{equation*}
However, it was proved in~\cite{Sch1991a} the following optimal result, which is an important improvement compared to Noether's bound.

\begin{thm}\label{thm:dihedral-group}
  Let $D_{n}$ be the dihedral group of index $n$. Then,
  \begin{equation*}
    \beta(D_{n}) = n + 1 .
  \end{equation*}
\end{thm}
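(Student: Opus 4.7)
The plan is to prove $\beta(D_n) = n+1$ by establishing the two bounds separately. Write $D_n = \langle r, s \mid r^n = s^2 = 1,\; srs = r^{-1}\rangle$ and let $C_n = \langle r\rangle$ be the cyclic subgroup of index $2$. Since $D_n/C_n \cong \mathbb{Z}/2$, for every $D_n$-representation $V$ we have $\mathbb{R}[V]^{D_n} = \bigl(\mathbb{R}[V]^{C_n}\bigr)^{\langle s\rangle}$, and this relative structure drives both halves of the argument.

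For the lower bound $\beta(D_n) \ge n+1$, I would exhibit a representation forcing a generator of degree $n+1$ into every minimal integrity basis. Take $V = V_{\mathrm{std}} \oplus U$, where $V_{\mathrm{std}} \cong \mathbb{R}^2$ is the standard 2-dimensional representation (with $r$ rotating by $2\pi/n$ and $s$ reflecting $y\mapsto -y$) and $U \cong \mathbb{R}$ carries the sign representation of $D_n/C_n$ (so $r$ acts trivially and $s$ by $-1$). With $z = x+iy$ on $V_{\mathrm{std}}$ and $t$ on $U$, one first determines the $C_n$-invariants to be generated by $x^2+y^2,\ \mathrm{Re}(z^n),\ \mathrm{Im}(z^n),\ t$, subject to the syzygy $\mathrm{Re}(z^n)^2+\mathrm{Im}(z^n)^2 = (x^2+y^2)^n$. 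Taking $\langle s\rangle$-invariants and using the syzygy to eliminate $\mathrm{Im}(z^n)^2$, one obtains a minimal integrity basis $\{x^2+y^2,\ t^2,\ \mathrm{Re}(z^n),\ P\}$, where $P := t\cdot \mathrm{Im}(z^n)$ has degree $n+1$. Irreducibility of $P$ is transparent: all other generators are even in $t$, whereas $P$ is odd in $t$, so $P$ cannot lie in the subalgebra they generate. This forces $\beta(D_n, V) = n+1$, and hence $\beta(D_n)\ge n+1$.

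For the upper bound $\beta(D_n)\le n+1$, I would combine Noether's bound~\eqref{eq:noether} for $C_n$ with a controlled $s$-symmetrization. Noether gives a generating set $\{g_1,\dots,g_N\}$ of $\mathbb{R}[V]^{C_n}$ with $\deg g_i \le n$. Every $D_n$-invariant $f$ equals $\tfrac{1}{2}(h + s\cdot h)$ for some $C_n$-invariant $h$, which expands as a polynomial in the $g_i$. Using the $\pm 1$-eigenspace decomposition of $\mathbb{R}[V]^{C_n}$ under the involution $s$, an induction on degree would then show that any monomial $g_{i_1}\cdots g_{i_k}$ of total degree $\ge n+2$ can be rearranged so that, after Reynolds averaging over $\langle s\rangle$, it decomposes into a sum of products of strictly lower-degree $D_n$-invariants. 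The $+1$ improvement over $n$ arises precisely from the coupling of a $C_n$-invariant of degree $n$ with a single factor transforming under the sign character of $s$, which cannot be split into two $D_n$-invariants.

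The main obstacle will be making the induction in the upper bound rigorous. The subtlety is that $s$ does not act on the chosen generators $g_i$ by simple permutation; rather, it permutes the isotypic subspaces of $\mathbb{R}[V]^{C_n}$, so in general $s\cdot g_i$ is a polynomial in several of the $g_j$'s. Schmid's argument in~\cite{Sch1991a} handles this through a detailed analysis of the $D_n$-character of each graded piece together with a degree-compatibility lemma for the Reynolds projector restricted to bi-homogeneous components. It is precisely this careful bookkeeping that prevents the bound from degenerating to the naive estimate $2n$ that one would obtain by blindly averaging a product of two degree-$n$ generators of $\mathbb{R}[V]^{C_n}$.
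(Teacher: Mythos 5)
The paper does not actually prove this theorem: it is quoted from Schmid~\cite{Sch1991a}, and only the upper bound (for $n=2,3$) is used later, via lemma~\ref{lem:beta-norm}, to obtain $\beta(\octa^{+})\le\beta(D_{3})\beta(D_{2})=12$. Your lower bound $\beta(D_{n})\ge n+1$ is a complete and correct self-contained argument: on $V_{\mathrm{std}}\oplus U$ the $C_{n}$-invariants are indeed generated by $x^{2}+y^{2}$, $\operatorname{Re}(z^{n})$, $\operatorname{Im}(z^{n})$, $t$ with the single syzygy you state; the $\langle s\rangle$-invariant subalgebra is generated by $x^{2}+y^{2}$, $t^{2}$, $\operatorname{Re}(z^{n})$ and $P=t\operatorname{Im}(z^{n})$; and since every invariant of degree at most $n$ is even in $t$ while $P$ is odd in $t$ and nonzero, $P$ cannot lie in the subalgebra generated in lower degree. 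That half stands on its own and is a nice explicit witness for optimality.

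The upper bound, however, is not proved. The ``induction on degree'' asserting that every $D_{n}$-invariant of degree at least $n+2$ decomposes into lower-degree invariants \emph{is} the content of the theorem, and your closing paragraph concedes that the decisive step is Schmid's own bookkeeping --- i.e.\ you defer the essential argument to the very reference you are replacing. Concretely, the only bound your setup yields for free is lemma~\ref{lem:beta-norm} applied to the normal subgroup $C_{n}$ of $D_{n}$, namely $\beta(D_{n})\le\beta(\mathbb{Z}/2)\,\beta(C_{n})=2n$: symmetrizing a product $g_{i_{1}}\cdots g_{i_{k}}$ over $\langle s\rangle$ produces cross terms of the form $g_{i_{1}}\cdots g_{i_{j}}\cdot s\cdot(g_{i_{j+1}}\cdots g_{i_{k}})$, which are not themselves products of $D_{n}$-invariants, and this is exactly what blocks the naive splitting. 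To close the gap one must prove a combinatorial statement of zero-sum type: after diagonalizing $r$ over $\mathbb{C}$, a $C_{n}$-invariant monomial corresponds to a weight sequence $(w_{1},\dots,w_{d})$ with $\sum_{i} w_{i}\equiv 0 \pmod{n}$, the reflection $s$ acts by $w\mapsto -w$, and one must show that for $d\ge n+2$ the symmetrized monomial $m+s\cdot m$ always lies in the subalgebra generated in lower degree (for instance by extracting a proper zero-sum subsequence compatibly with the involution). Nothing in your sketch supplies this; as written, the proposal establishes only $n+1\le\beta(D_{n})\le 2n$, and it is the unproved upper half that the paper actually needs.
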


The second ingredient required to achieve our goal is an abstract but very useful result from group theory, lemma~\ref{lem:beta-norm} below, which was formulated and proved in~\cite[Lemma 3.1]{Sch1991a}. To understand its statement, recall first that a subgroup $N$ of $G$ is said to be \emph{normal} if it stable by conjugacy, which means that
\begin{equation*}
  gNg^{-1} = N, \qquad \forall g \in G.
\end{equation*}
The important property is that if $N$ is a normal subgroup of $G$, then the \emph{quotient space} (the set of left classes)
\begin{equation*}
  G/N := \set{gN;\; g \in G}
\end{equation*}
is also a group, where the group operation is defined as
\begin{equation*}
  g_{1}N \star g_{2}N := (g_{1}g_{2})N.
\end{equation*}

\begin{lem}\label{lem:beta-norm}
  Let $G$ be a finite group and $N$ be a normal subgroup of $G$. Then,
  \begin{equation*}
    \beta(G) \le \beta(G/N) \beta(N).
  \end{equation*}
\end{lem}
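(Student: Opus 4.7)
The plan is to reduce the problem to two successive applications of the definition of $\beta$: one for $N$ acting on the given representation, and one for $G/N$ acting on an auxiliary representation built out of the $N$-invariants. Let $V$ be an arbitrary finite-dimensional representation of $G$; we must show that $\RR[V]^{G}$ is generated, as an algebra, by homogeneous polynomials of total degree at most $\beta(G/N)\,\beta(N)$.

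First I would use $\beta(N)$ applied to the restricted representation: the algebra $\RR[V]^{N}$ is generated by homogeneous $N$-invariants of $V$-degree at most $\beta(N)$. Packaging these together, I would set
\begin{equation*}
    U := \bigoplus_{d=1}^{\beta(N)} \RR[V]^{N}_{d},
\end{equation*}
a finite-dimensional vector space that generates $\RR[V]^{N}$ as an algebra. The decisive point is that, because $N$ is normal in $G$, the $G$-action on $\RR[V]$ preserves $\RR[V]^{N}$ (since for $g\in G$, $n\in N$, one has $n\star(g\star p)=g\star((g^{-1}ng)\star p)=g\star p$), and the kernel of this action contains $N$, so it descends to a representation of the quotient $G/N$. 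Since the $G$-action preserves total degree in $V$, each summand $\RR[V]^{N}_{d}$ is $G/N$-stable and therefore so is $U$.

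Next I would invoke the universal property of the symmetric algebra $\mathrm{Sym}(U)$, a polynomial algebra on which $G/N$ acts linearly through $U$, and consider the canonical $G/N$-equivariant surjective algebra morphism
\begin{equation*}
    \varphi : \mathrm{Sym}(U) \longrightarrow \RR[V]^{N}
\end{equation*}
extending the inclusion $U\hookrightarrow \RR[V]^{N}$. This morphism carries two gradings: the intrinsic grading of $\mathrm{Sym}(U)$ (by number of factors), which is the one relevant to $G/N$, and the grading induced from $V$ through $\varphi$. The key quantitative observation is that an element homogeneous of $\mathrm{Sym}(U)$-degree $k$ is mapped by $\varphi$ to a polynomial on $V$ of total degree at most $k\cdot \beta(N)$, because each factor in $U$ has $V$-degree at most $\beta(N)$.

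To finish, I would combine $\RR[V]^{G}=\bigl(\RR[V]^{N}\bigr)^{G/N}=\varphi\!\bigl(\mathrm{Sym}(U)^{G/N}\bigr)$ with the definition of $\beta(G/N)$ applied to the representation $U$: the subalgebra $\mathrm{Sym}(U)^{G/N}$ admits a generating set of $\mathrm{Sym}(U)$-degree at most $\beta(G/N)$. Its image under $\varphi$ is therefore a generating set of $\RR[V]^{G}$ whose elements have $V$-degree at most $\beta(G/N)\,\beta(N)$, so $\beta(G,V)\le \beta(G/N)\,\beta(N)$; taking the supremum over $V$ yields the lemma. The only subtle point is bookkeeping the two gradings on $\mathrm{Sym}(U)$ and verifying that $\varphi$ is $G/N$-equivariant, which both boil down to the equivariance of the inclusion $U\hookrightarrow \RR[V]^{N}$ and the functoriality of $\mathrm{Sym}$; everything else is a direct degree count.
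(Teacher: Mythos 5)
The paper does not actually prove this lemma: it is quoted verbatim from Schmid \cite[Lemma~3.1]{Sch1991a} and used as a black box, so there is no in-paper argument to compare against. Your proposal is, in substance, the standard proof of Schmid's result, and it is essentially correct: the two successive applications of the definition of $\beta$ --- first to the restriction of $V$ to $N$, producing the finite-dimensional $G/N$-module $U=\bigoplus_{d=1}^{\beta(N)}\RR[V]^{N}_{d}$ (whose $G$-stability is exactly where normality of $N$ enters, as you note), then to $U$ as a representation of $G/N$ --- combined with the degree bookkeeping through the surjection $\varphi:\mathrm{Sym}(U)\to\RR[V]^{N}$, yield $\beta(G,V)\le\beta(G/N)\,\beta(N)$, and taking the supremum over $V$ finishes the proof. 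Two small points deserve to be made explicit. First, the equality $\varphi\bigl(\mathrm{Sym}(U)^{G/N}\bigr)=\bigl(\RR[V]^{N}\bigr)^{G/N}$ does not follow from surjectivity of $\varphi$ alone (that only gives one inclusion); for the reverse inclusion you must average a preimage over $G/N$ with the Reynolds operator~\eqref{eq:Reynolds-projector}, which is legitimate because we work over $\RR$, a field of characteristic zero. Second, the images under $\varphi$ of the degree-$\le\beta(G/N)$ generators of $\mathrm{Sym}(U)^{G/N}$ are homogeneous for the $\mathrm{Sym}(U)$-grading but generally \emph{inhomogeneous} for the $V$-grading (since $U$ mixes several $V$-degrees); to get homogeneous generators of $\RR[V]^{G}$ one passes to their $V$-homogeneous components, which preserves the bound $\beta(G/N)\,\beta(N)$ on the total degree. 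With these two remarks added, the argument is complete.
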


We are now able to formulate the main result of this section and provide a proof for it.

\begin{thm}\label{thm:octa-plus-bound}
  Let $\octa^{+}$ be the subgroup of positive isometries which preserve the cube. Then, we have
  \begin{equation*}
    \beta(\octa^{+}) \le 12 .
  \end{equation*}
\end{thm}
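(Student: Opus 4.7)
The plan is to apply Lemma~\ref{lem:beta-norm} combined with the optimal dihedral bound of Theorem~\ref{thm:dihedral-group}, exploiting the well-known isomorphism $\octa^{+} \cong S_{4}$ and its normal subgroup structure.

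The first step would be to identify an appropriate normal subgroup $N \triangleleft \octa^{+}$. The proper octahedral group of order $24$ acts faithfully on the four long diagonals of the cube, giving $\octa^{+} \cong S_{4}$. Among the normal subgroups of $S_{4}$, the most useful one here is the Klein four-group $V_{4} \triangleleft S_{4}$, which geometrically consists of the identity together with the three half-turns about the coordinate axes (the axes joining centers of opposite faces of the cube). The quotient $\octa^{+}/V_{4} \cong S_{3}$ permutes these three axes, and $S_{3}$ is isomorphic to the dihedral group $D_{3}$ of order $6$.

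The second step would be to bound each of the two factors. For the quotient, Theorem~\ref{thm:dihedral-group} immediately yields
\begin{equation*}
    \beta(\octa^{+}/V_{4}) = \beta(D_{3}) \le 3 + 1 = 4.
\end{equation*}
For the subgroup $V_{4}$, observe that $V_{4}$ is itself isomorphic to the dihedral group $D_{2}$ (generated by a half-turn and a reflection in the plane); applying Theorem~\ref{thm:dihedral-group} again gives
\begin{equation*}
    \beta(V_{4}) = \beta(D_{2}) \le 2 + 1 = 3.
\end{equation*}

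The third and final step is to combine these two estimates via Lemma~\ref{lem:beta-norm}, which applies precisely because $V_{4}$ is normal in $\octa^{+}$:
\begin{equation*}
    \beta(\octa^{+}) \le \beta(\octa^{+}/V_{4}) \, \beta(V_{4}) \le 4 \cdot 3 = 12.
\end{equation*}
This is exactly the bound we seek. The main conceptual obstacle is choosing the right normal subgroup: using instead the obvious normal subgroup $A_{4} \triangleleft S_{4}$ (of order $12$) would give only the trivial factorization through $\ZZ/2$, while the Klein four-group is precisely the normal subgroup whose quotient is dihedral, allowing Theorem~\ref{thm:dihedral-group} to be invoked on \emph{both} factors. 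Noether's bound \eqref{eq:noether} alone would yield only $\beta(\octa^{+}) \le 24$, which is twice too large for the algorithm to be tractable.
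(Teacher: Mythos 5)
Your proof is correct and follows essentially the same route as the paper: the normal subgroup you call $V_{4}$ (the three half-turns about face axes plus the identity) is exactly the subgroup of double transpositions used in the paper's proof, and both arguments then apply Lemma~\ref{lem:beta-norm} with $\beta(D_{2})=3$ and $\beta(S_{3})=\beta(D_{3})=4$ from Theorem~\ref{thm:dihedral-group} to get $\beta(\octa^{+})\le 12$.
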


\begin{rem}
  It has not been checked that the bound proposed in theorem~\ref{thm:octa-plus-bound} is optimal.
\end{rem}

In order to make our proof as simple as possible, we will recall first the following facts.

\begin{enumerate}
  \item The group $\octa^{+}$ is isomorphic (as an abstract group) to the permutation group of four elements, noted $\mathfrak{S}_{4}$. It corresponds, indeed, to the permutations of the four main diagonals of the cube (see~\cite[15.4 p. 273]{coxeter1961} and \cite{FH1991}, for instance).
  \item The subset of the permutation group $\mathfrak{S}_{4}$ generated by double transpositions
        \begin{equation*}
          \set{e,(12)(34), (13)(24), (14)(23)}
        \end{equation*}
        is a normal subgroup of $\mathfrak{S}_{4}$ which is isomorphic (as an abstract group) to the dihedral group $D_{2}$ and the quotient group $\mathfrak{S}_{4}/D_{2}$ is isomorphic (as an abstract group) to $\mathfrak{S}_{3}$, the permutation group of $3$ elements.
  \item The permutation group of $3$ elements $\mathfrak{S}_{3}$ is isomorphic (as an abstract group) to the dihedral group $D_{3}$~\cite{coxeter1961,FH1991}.
\end{enumerate}

\begin{proof}[Proof of theorem~\ref{thm:octa-plus-bound}]
  Since $\octa^{+}$ is isomorphic (as an abstract group) to $\mathfrak{S}_{4}$, we have
  \begin{equation*}
    \beta(\octa^{+}) = \beta(\mathfrak{S}_{4}).
  \end{equation*}
  Now, using the fact that the subgroup of $\mathfrak{S}_{4}$ generated by double transpositions is a normal subgroup isomorphic (as an abstract group) to  $D_{2}$ and that the quotient group $\mathfrak{S}_{4}/D_{2}$ is isomorphic (as an abstract group) to $\mathfrak{S}_{3}$ we deduce from lemma~\ref{lem:beta-norm} that
  \begin{equation}\label{eq:S4-S3}
    \beta(\mathfrak{S}_{4}) \le \beta(\mathfrak{S}_{3})\beta(D_{2}).
  \end{equation}
  Finally, since $\mathfrak{S}_{3}$ is isomorphic (as an abstract group) to $D_{3}$, we get that $\beta(\mathfrak{S}_{3}) = \beta(D_{3})$ and hence that
  \begin{equation*}
    \beta(\mathfrak{S}_{4}) \le \beta(D_{3})\beta(D_{2}) = 12
  \end{equation*}
  by virtue of theorem~\ref{thm:dihedral-group}. This achieves the proof.
\end{proof}


\end{document}